\theoremstyle{plain}
  \newtheorem{theorem}{Theorem}[section]
  \newtheorem{proposition}[theorem]{Proposition}
\theoremstyle{definition}
\theoremstyle{remark}
\numberwithin{equation}{section}
\newcommand{\caA}{{\mathcal A}}
\newcommand{\caC}{{\mathcal C}}
\newcommand{\caH}{{\mathcal H}}
\newcommand{\caI}{{\mathcal I}}
\newcommand{\caP}{{\mathcal P}}
\newcommand{\bbC}{{\mathbb C}}
\newcommand{\bbR}{{\mathbb R}}
\newcommand{\bbZ}{{\mathbb Z}}
\DeclareMathAlphabet{\mathpzc}{OT1}{pzc}{m}{it}
\DeclareMathOperator{\tr}{tr}
\newcommand{%
      \input{Drawings/.pdf_tex}}[1]{%
      \input{Drawings/#1.pdf_tex}}
\newcommand{%
        \input{Drawings/.pdf_tex}}[2]{%
        \input{Drawings/#2.pdf_tex}}
\newcommand{\beq}{\begin{equation}}
\newcommand{\eeq}{\end{equation}}
\newcommand{\baq}{\begin{eqnarray}}
\newcommand{\eaq}{\end{eqnarray}}
\renewcommand{\d}{\mathrm{d}}
\newcommand{\e}{\mathrm{e}}
\newcommand{\opunit}{\text{1}\kern-0.22em\text{l}}
\title{Local Perturbations Perturb --Exponentially-- Locally }
\author{W.~De Roeck}
\address[W.~De Roeck]{Instituut voor Theoretische Fysica, K.~U.~Leuven, Celestijnenlaan 200D, B-3001 Heverlee (Belgium)}
\email{wojciech.deroeck@fys.kuleuven.be }
\author{M.~Sch\"utz}
\address[M.~Sch\"utz]{Instituut voor Theoretische Fysica, K.~U.~Leuven, Celestijnenlaan 200D, B-3001 Heverlee (Belgium)}
\email{marius.schutz@fys.kuleuven.be}
\begin{document}

\begin{abstract}
We elaborate on the principle that for gapped quantum spin systems with local interaction ``local perturbations [in the Hamiltonian] perturb  locally [the ground state]''.  This principle was established in \cite{BMNS}, relying on the `spectral flow technique' or `quasi-adiabatic continuation' \cite{H}  to obtain locality estimates with sub-exponential decay in the distance to the spatial support of the perturbation. 
We use  ideas of \cite{HMNS} to obtain similarly a transformation between gapped eigenvectors and their perturbations that is local with exponential decay.  This allows to improve locality bounds on the effect of perturbations on the low lying states in certain gapped models with a unique `bulk ground state' or `topological quantum order'. We also give some estimate on the exponential decay of correlations in models with impurities  where some relevant correlations decay faster than one would naively infer from the global gap of the system, as one also expects in disordered systems with a localized ground state.  \\
\end{abstract}

\maketitle

\section{Introduction}

This article concerns quantum spin systems with local interactions, whose dynamics typically satisfy a Lieb--Robinson propagation bound \cite{LR}.
We consider paths of Hamiltonians that are modified only locally in a confined spatial region and that maintain a gapped sector in the spectrum. We ask to what (spatial) extent the eigenvectors from the gapped sector are affected by such modifications. This setting in particular includes the study of ground states. The regions where the Hamiltonian is modified, may be viewed as impurities.

More specifically, we investigate the principle that ``local perturbations perturb locally''(LPPL), which was formulated in \cite{BMNS}. The authors of \cite{BMNS} give a general and rigorous account of the `spectral flow technique' (also named `quasi-adiabatic continuation') introduced by Hastings in seminal work \cite{H}, see also \cite{HW}, which allows to relate the spectral projections of an isolated patch of spectrum along the Hamiltonian path by means of a quasi-local unitary flow. The spectral flow is quasi-local in the sense that the involved unitary operator can be approximated by a truly local one up to an error that decreases (only) `sub-exponentially' in the diameter of the support of this local operator. As an application, they show the principle LPPL to hold in the sense that for a unique ground state the effect of the perturbation' on the expectation value of local observables decays (only) sub-exponentially in the distance.

Instead of using the spectral flow technique, we relate the individual eigenvectors from the gapped sector (along the path) by transformations that are indeed exponentially local by building on ideas from \cite{HMNS}, based on Lieb--Robinson bounds.
These transformations are the content of our main Theorem \ref{thm: main} in Section \ref{sec: results}. As a drawback, these transformations do not correspond to a unitary linear map between the  spectral subspaces (In fact, they do not even correspond to a linear map, unless one gives up exponential locality).  Moreover, they do not allow as such to consider perturbations in the whole volume, in contrast to the spectral flow (in that case however, it is usually hard to verify that the spectral gap condition remains satisfied along the path, see e.g.~\cite{MZ}). Here we are exclusively concerned with perturbations that have bounded support and hence norm-bounds independent of the volume.\\

By invoking the exponential clustering property for gapped ground states (see Theorem \ref{thm: ExpCluster}) our main result enables us to improve the LPPL principle to exponential precision for systems with unique ground state (also covered already in \cite{HMNS}) or with topological quantum order, thus justifying our title. 
Furthermore, we may also dope these systems with impurities that increase the dimension of the low lying sector by locally closing the gap. This is modeled by adding on certain sites additional degrees of freedom with degenerate energies. In that setup, we arrive at statements on individual eigenstates from that sector without making further assumptions on the spectrum in that sector (e.g.~that it contracts to a point in the thermodynamic limit, see \cite{HK}). We also give a result on the exponential decay of correlations in such impurity models. There, the decay rate is determined by the gap to the sector instead of the energy differences within the sector. Indeed, for disordered systems with many-body localization, one expects exponential decay of (most) correlations whereas the system is probably gapless. These applications are described in Section \ref{sec: applications}. 

Similar impurity models (restricting to perturbations of classical systems in the bulk) were studied before by Albanese \cite{A}. An interesting result of this flavour was given by Yarotsky on ground states that are perturbations of classical states. In \cite{Y} he showed that the influence of arbitrary boundary conditions, which in particular may also close the gap, decays exponentially into the bulk, underlining the importance of a `bulk gap' for decay of correlations besides the true spectral gap. 

Finally, in Section \ref{sec: Example}, we ask the natural question whether one can construct an exponentially local and unitary spectral flow, or whether there is some fundamental obstruction to the existence to such a flow. This question is particularly important in view of the wealth of applications of this technique, e.~g.~concerning (topological) ground state phases and their stability \cite{BH,BHM}, the quantum Hall effect \cite{HM}, or disordered systems \cite{H5}, and see also \cite{H6,H4,O}. In general, there seems to be no obvious way to obtain such a exponentially local unitary spectral flow, but we present a simple example where it can be obtained by exploiting the fact that in this instance the issue reduces to a few-particle problem.

\subsection*{Acknowledgements} It is a pleasure to thank Sven Bachmann for interesting discussions and help at the beginning of this project.
Furthermore, WDR and MS are thankful to the DFG (German Science Foundation) and the Belgian Interuniversity Attraction Pole (P07/18 Dygest) for financial support.

\section{Setup}

Let $\Gamma$ be the infinite volume, the set of vertices of an infinite graph equipped with the (shortest) graph distance metric $d$. Typically we think of the $\nu$-dimensional lattice $\Gamma=\bbZ^\nu$, but we will also need $\Gamma$  obtained by contracting a connected subset of vertices $S$ to a single vertex. We write $X\Subset \Gamma$ to indicate a finite subset of the infinite volume and if we speak of a volume $\Lambda$ we refer to a finite connected subset of $\Gamma$. 
For a given length $l\geq 0$, we call $X_l:=\{x\in\Lambda;
\, d(X,\{x\})\leq l\}$ the $l$-fattening of $X$. 

To each vertex $x\in\Gamma$ we assign a finite dimensional Hilbert space $\caH_x$ and to each subset $X\Subset \Gamma$ the tensor product $\caH_X:=\bigotimes_{x\in X} \caH_x$. The operators on $\caH_X$ endowed with the operator norm constitute the algebra $\caA_X$ of observables that, we say, have support in $X$. For $X\subset X'$ any $A\in\caA_X$ can be identified with a local operator $A\otimes \opunit$ acting on $\caH_{X'}$ and we will not distinguish between the two in notation. Note that we do not require all $\mathcal{H}_x$ to have the same dimension, which would be unnatural since we think of some vertices as corresponding to a region $S\subset \bbZ^\nu$.

An interaction $\Phi$ on $\Gamma$ is a family of self-adjoint local operators $\{\Phi(X)\}_{X\Subset \Gamma}$, $\Phi(X)\in\caA_X$. We say that it has a finite range $R>0$ if $\Phi(X)=0$ whenever $\mathrm{diam}(X)>R$ (diameter of $X$).  For each volume $\Lambda$ the interaction $\Phi$ defines a Hamiltonian (with open boundary conditions)
\begin{equation}
 H_\Lambda^{\Phi}:=\sum_{X\subset \Lambda} \Phi(X)
\end{equation}
and a one-parameter group of automorphisms $\{\tau_t^\Lambda\}$, $t\in\bbR$, on $\caA_\Lambda$ through
\begin{equation}
 \tau_t^{\Phi,\Lambda}(A)\equiv \tau_t^\Lambda(A):= \e^{i H_\Lambda^{\Phi} t} A \e^{-i H_\Lambda^{\Phi} t}, \quad A\in\caA_\Lambda
\end{equation}
which is the (Heisenberg) time evolution of the quantum spin system. The set of those sites within a subset $X\subset \Gamma$ which interact with sites of the complement $X^c$,
\begin{equation}
\begin{split}
  \partial^\Phi X :=\bigl\{ x\in X \,;\,\exists & Y\Subset \Gamma \text{ with } x\in Y \text{ and } \\
   &X\cap Y\neq \emptyset, \,X^c\cap Y\neq \emptyset, \,\Phi(Y)\neq 0 \bigr\}
\end{split}
\end{equation}
will be called the $\Phi$-boundary of $X$.

\subsection{Lieb--Robinson Bounds}\label{sec: LiebRobinson}
Here we briefly state a version of the Lieb--Robinson bounds on propagation in quantum spin systems as proven in \cite{NS07}. See also \cite{NS10} by the same authors, where it is more explicit that the Lieb--Robinson velocity (see $v$ below) can be defined independent of the single-site interaction. 
We fix a family of non non-increasing functions $F_\mu:[0,\infty)\rightarrow (0,\infty)$, $\mu\geq 0$, with $F_\mu(d):=\e^{-\mu d}F_0(d)$, such that, for $\mu=0$, and hence for all $\mu\geq0$, the following hold:\\
 (i) uniform summability
 \begin{equation}
  \lVert F_\mu \rVert := \sup_{x\in \Gamma} \sum_{y\in \Gamma} F_{\mu}\bigl(d(x,y)\bigr)<\infty
 \end{equation}
 (ii) convolution property
 \begin{equation}
  C_\mu := \sup_{x,y \in \Gamma} \sum_{z\in\Gamma} \frac{F_{\mu}\bigl(d(x,z)\bigr)F_{\mu}\bigl(d(z,y)\bigr)}{F_{\mu}\bigl(d(x,y)\bigr)} < \infty
 \end{equation}
To bring locality on the lattice into the game, we define a Banach space of exponentially decaying interactions $\mathscr{B}_\mu(\Gamma)$, consisting of $\Phi$ such that
\begin{equation}
 \lVert \Phi \rVert_\mu := \sup_{x,y \in \Gamma} \sum_{X \ni x,y} \frac{\lVert \Phi(X)\rVert}{F_{\mu}\bigl(d(x,y)\bigr)}<\infty
\end{equation}
The above expression without the contributions from single-site interaction terms, i.e., imposing additionally the restriction $\lvert X\rvert>1$ in the sum, is abbreviated with $\lVert \Phi \rVert_\mu'$. The following Theorem states the announced propagation bounds, which first appeared in \cite{LR}.
\begin{theorem}
 Let $\mu > 0$ and $\Phi \in \mathscr{B}_\mu(\Gamma)$. For every volume $\Lambda \Subset \Gamma$ and pair of local observables $A\in\caA_X$ and $B\in\caA_Y$ with supports $X,Y \subset \Lambda$, the time evolution $\tau_t^{\Phi,\Lambda}$ satisfies
 \begin{equation}
  \bigl\lVert [ \tau_t^{\Phi,\Lambda}(A), B ]\bigr\rVert\leq \frac{2\lVert F_0 \rVert}{C_\mu} \, \lVert A\rVert \lVert B\rVert  \min\bigl\{\lvert \partial^\Phi X \rvert, \lvert \partial^\Phi Y \rvert\bigr\} \e^{-\mu[ d(X,Y)-v\lvert t\rvert]}
 \end{equation}
for all times $t\in \bbR$ where 
\begin{equation}
 v:= \frac{2 \lVert \Phi \rVert_\mu' C_\mu}{\mu}
\end{equation}
is the so-called Lieb--Robinson velocity.
\end{theorem}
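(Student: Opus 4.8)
The statement is the classical Lieb--Robinson bound, and the plan is to run the standard Gronwall--Duhamel argument (\cf\cite{LR,H,NS07}), keeping the combinatorics tight enough that the prefactor $\min\{|\partial^\Phi X|,|\partial^\Phi Y|\}$ and the precise velocity $v$ come out. First, since $\e^{-iH_\Lambda^\Phi t}$ is unitary, $\lVert[\tau_t^{\Phi,\Lambda}(A),B]\rVert=\lVert[A,\tau_{-t}^{\Phi,\Lambda}(B)]\rVert$, so it suffices to prove the bound with $|\partial^\Phi X|$ in place of the minimum; exchanging $A\leftrightarrow B$ and $t\mapsto-t$ then gives the $|\partial^\Phi Y|$ version, and hence the minimum. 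We normalize $\lVert A\rVert=\lVert B\rVert=1$ and assume $X\cap Y=\emptyset$ (the case $d(X,Y)=0$ being trivial).

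For the core estimate, fix $B$ and set $C_B(Z',t):=\sup\{\lVert[\tau_t^\Lambda(A'),B]\rVert:A'\in\caA_{Z'},\ \lVert A'\rVert\le1\}$. Differentiating $[\tau_t^\Lambda(A'),B]$ in $t$, using $\tfrac{\d}{\d t}\tau_t^\Lambda(A')=i\tau_t^\Lambda([H_\Lambda^\Phi,A'])$, $[H_\Lambda^\Phi,A']=\sum_{Z\cap Z'\neq\emptyset}[\Phi(Z),A']$, and $\lVert[\Phi(Z),A']\rVert\le2\lVert\Phi(Z)\rVert$, and then integrating, one obtains the recursion
\[
 C_B(Z',t)\ \le\ C_B(Z',0)\,+\,2\int_0^{|t|}\ \sum_{Z\cap Z'\neq\emptyset}\lVert\Phi(Z)\rVert\ C_B(Z'\cup Z,s)\ \d s ,
\]
where $C_B(Z',0)=2$ if $Z'\cap Y\neq\emptyset$ and $0$ otherwise; in particular $C_B(X,0)=0$. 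To sharpen the summation from all $Z$ meeting $Z'$ to only those crossing the $\Phi$-boundary of $Z'$ (hence to terms with $|Z|\ge2$, single-site terms excluded), one first passes to the interaction picture generated by the boundary-free Hamiltonian $\tilde H:=\sum\{\Phi(Z):Z\subset Z'\ \text{or}\ Z\subset(Z')^c\}$, which preserves $\caA_{Z'}$ and so removes the dynamics internal to $Z'$; this bookkeeping is the subtle part of the derivation.

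Iterating the sharpened recursion from $Z'=X$ yields, for each $n\ge1$, a contribution $\frac{2(2|t|)^n}{n!}\sum\prod_{j=1}^n\lVert\Phi(Z_j)\rVert$, the sum running over chains $Z_1,\dots,Z_n$ in which $Z_1$ crosses $\partial^\Phi X$, each $Z_j$ crosses the boundary of $X\cup Z_1\cup\cdots\cup Z_{j-1}$, and $X\cup Z_1\cup\cdots\cup Z_n$ meets $Y$. Estimating these chain sums is the main obstacle: one writes $F_\mu(d)=\e^{-\mu d}F_0(d)$ and pulls $\e^{-\mu d(X,Y)}$ out of the telescoping distances along the chain; collapses each internal link into a factor $\lVert\Phi\rVert_\mu'\,C_\mu$ via the convolution property (the primed norm because only $|Z_j|\ge2$ occur); sums the first link over the sites of $\partial^\Phi X$ to produce the factor $|\partial^\Phi X|$; and bounds the remaining free sum against $F_0$ by $\lVert F_0\rVert$. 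Summing over $n$ then gives $\sum_{n\ge1}\frac{(2|t|\lVert\Phi\rVert_\mu' C_\mu)^n}{n!}=\e^{2|t|\lVert\Phi\rVert_\mu' C_\mu}-1=\e^{\mu v|t|}-1\le\e^{\mu v|t|}$ with $v=2\lVert\Phi\rVert_\mu' C_\mu/\mu$, while the leftover constants combine to the prefactor $2\lVert F_0\rVert/C_\mu$ --- precisely the asserted bound. All interchanges of sum, integral and derivative are legitimate since $\caH_\Lambda$ is finite dimensional and $t\mapsto\tau_t^\Lambda(A)$ is entire.
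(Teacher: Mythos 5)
The paper does not prove this theorem; it cites \cite{NS07} (and \cite{LR}, \cite{NS10}) and uses the result as a black box. Your plan is indeed the standard Duhamel--Gronwall argument from those references, and the symmetry reduction, the boundary-sharpening via the interaction picture (to get $\lVert\Phi\rVert_\mu'$ and $|\partial^\Phi X|$), the telescoping with the weight function $F_\mu$, and the final summation of the series all correctly identify the ingredients. The constants also come out right on the arithmetic you describe.

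However, the recursion you wrote down has a genuine flaw. After one Duhamel step the observable $[\Phi(Z),A']$ has support $Z'\cup Z$, and you accordingly wrote
\[
 C_B(Z',t)\le C_B(Z',0)+2\int_0^{|t|}\sum_{Z\cap Z'\neq\emptyset}\lVert\Phi(Z)\rVert\,C_B(Z'\cup Z,s)\,\d s,
\]
so that the support grows monotonically along the iteration. The chains you then describe, with each $Z_j$ crossing the boundary of the accumulated union $X\cup Z_1\cup\cdots\cup Z_{j-1}$, are not linear: consecutive $Z_j$'s need not share a site, and a later $Z_j$ can sit anywhere along the enlarged boundary. As a consequence one cannot pick intermediate sites $w_1,\dots,w_{n-1}$ with $w_j\in Z_j\cap Z_{j+1}$, and the chain sums do not collapse to $C_\mu^{n-1}F_\mu(d(x,y))$ via the convolution property, as your telescoping step requires. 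The argument in \cite{NS06,NS07} avoids this by \emph{not} enlarging the support: one writes $\partial_t[\tau_t(A),B]=i[[\tau_t(H_X^{\mathrm{bd}}),\tau_t(A)],B]$ and applies the Jacobi identity to split this as $i[\tau_t(H_X^{\mathrm{bd}}),[\tau_t(A),B]]-i[\tau_t(A),[\tau_t(H_X^{\mathrm{bd}}),B]]$. The first term is a norm-preserving rotation of $[\tau_t(A),B]$ and contributes nothing upon taking norms, while the second term is controlled by $C_B(Z,s)$ with support exactly $Z$ (not $Z'\cup Z$). This yields the recursion $C_B(X,t)\le C_B(X,0)+2\int_0^{|t|}\sum_{Z\cap X\neq\emptyset}\lVert\Phi(Z)\rVert\,C_B(Z,s)\,\d s$, which iterates to genuinely linear chains $Z_1\cap X\neq\emptyset$, $Z_j\cap Z_{j-1}\neq\emptyset$, $Z_n\cap Y\neq\emptyset$; only then does the convolution property give the asserted velocity. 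So the missing ingredient is the Jacobi-identity (equivalently, rotating-frame) step that replaces the growing union by the single new support $Z$.
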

Note that the above estimate, and in particular the Lieb--Robinson velocity $v$, does not depend on the single-site interaction terms $\Phi(\{x\})$, $x \in \Gamma$, and also not on the volume $\Lambda$.

\subsection{Gapped Sectors and Local Perturbations}\label{sec: GappedSectors}

We would like to study the effect on eigenfunctions of the previously defined Hamiltonians $H_\Lambda^{\Phi}$ when adding local perturbations that have support only on a finite number of sites $k_1, k_2,\dots$ within a subset denoted by $K\Subset \Gamma$. For each of these sites let $W_{i}:[0,1]\rightarrow \caA_{\{k_i\}}$ be a smooth path of self-adjoint operators starting at $W_i(0)=0$.  We denote with $C_W\geq 0$ the maximal norm of the derivative of $W:=\sum_i W_i$. Then
\begin{equation}
 H_\Lambda(s):= H_\Lambda^{\Phi} + W(s)
\end{equation}
defines a smooth path of Hamiltonians. The restriction to (independent) perturbations at single sites is not severe since we are working with general infinite volume graphs and single-site Hilbert spaces. Generalization to perturbations involving several sites can be achieved by starting out with a modified infinite volume, in which these sites are merged to a single vertex. Note however that our estimates are sensitive to the norm of $W$ (e.g.\ via $C_W$) and hence in practice to the size of $K$.\\

Our \emph{main assumption} is that $H_\Lambda(s)$ maintains a sector in the spectrum which has a gap of at least $g>0$ to the rest of the spectrum, uniformly along the path and for all volumes $\Lambda$ under consideration (e.g.~for all volumes large enough to exclude non-interesting pathologies). Therefore the dependence on $\Lambda$ will be mostly suppressed in notation. More precisely, we assume as in \cite{BMNS} that the spectrum of $H(s)$ is a union
\begin{equation}
\sigma(s)=\sigma_{\mathrm{in}}(s)\cup \sigma_{\mathrm{out}}(s) 
\end{equation}
where $\sigma_{\mathrm{in}}(s)$ is contained in an interval of length at most $L\geq0$ which does not intersect $\sigma_{\mathrm{out}}(s)$. We require that the distance between these two parts of the spectrum is bounded below by $g>0$ uniformly for all $s\in[0,1]$ (and all volumes $\Lambda$).\\

The orthogonal projection onto the $D$-dimensional sector eigenspace is denoted by $P(s)$. For each $s\in[0,1]$, let $\lambda_i(s)$, $i=1,\dots,D$, represent the eigenvalues in the sector of $H(s)$ repeated according to their multiplicity and $\{\psi_i(s)\}_i$ a set of associated orthonormal eigenvectors $H(s)\psi_i(s)=\lambda_i(s)\psi_i(s)$. 
\\

Given the above setting, it was shown (\cite{BMNS}) that there exists a continuous differentiable path of unitaries $U(s)$ inducing the previously mentioned `spectral flow'
\begin{equation}\label{eq: SpectralFlow}
 P(s)=U(s) P(0) U(s)^*
\end{equation}
and solving the evolution equation  
\begin{equation}
 -i \partial_s U(s) = G(s)U(s), \qquad U(0)=\opunit
\end{equation}
with the generator $G(s)$  constructed explicitly as a self-adjoint quasi-local operator acting non-trivially only close to the perturbation. The accuracy with which $G(s)$ and hence also $U(s)$ can be approximated by a local operator in $\mathcal{A}_X$ is shown to increase `sub-exponentially' in $d(X^c,K)$. Again, our main motivation here is to complement these findings by a result with exponential accuracy. This comes at a cost: we are able to transform to the perturbed eigenvectors by means of quasi-local operators which decay exponentially in $d(X^c,K)$, but this transformation will not be unitary.

\section{Main Result}\label{sec: results}
\subsection{Weak Local Perturbation} \label{sec: WeakLocal}
In this paragraph we restrict to weak perturbations in the sense that we compare the eigenvectors in the sectors of $H(s_0+\varepsilon)$ and $H(s_0)$ for $s_0\in [0,1-\varepsilon]$ and $\varepsilon>0$ small enough. However, the following observations are in fact the main step in our analysis and the result on the influence of local perturbations along the path $H(s)$ will be obtained by iterating the argument. 

In what follows, we assume that an interaction $\Phi \in \mathscr{B}_{\mu}(\Gamma)$ is fixed, as described above, and that the path of Hamiltonians $H(s)$ satisfies the \emph{main assumption} stated in Section \ref{sec: GappedSectors}.  Therefore, constants appearing will be understood to depend only on the choice of $\mu$ and the parameters related to the interaction $\Phi$, i.e.\  $\lVert F_0 \rVert$, $C_\mu$, and $v$, and also on the parameters involved in the \emph{main assumption}:  the sector dimension $D$, the number of perturbation sites $\lvert K\rvert$ and $g$, $C_W$, and $L$.  Most importantly, constants are independent of the volume $\Lambda$.  The statement `for $\varepsilon$ small enough' and $O(\varepsilon)$ is to be understood similarly, the maximum value can depend on all these parameters but not the volume. 

By the existence of the gap $g>0$ and basic perturbation theory \cite{Ka}
\begin{equation}
\lVert P(s_0+\varepsilon)-P(s_0)\rVert = O(\epsilon) 
\end{equation}
and the vectors $\{P(s_0+\varepsilon)\psi_i(s_0)\}_{i=1,\dots, D}$ remain linearly independent, if $\varepsilon>0$ is small enough. Then the perturbed eigenvectors (normalized) can be written as
\begin{equation} \label{eq: SmallStep}
 \psi_i(s_0+\varepsilon)=\sum_{j} c_{ij}(s_0,\varepsilon) P(s_0+\varepsilon) \psi_j(s_0)
\end{equation}
for suitable coefficients $c_{ij}(s_0,\varepsilon)\in\bbC$, $i,j=1,\dots, D$. One rather obvious bound on these coefficients is
\begin{equation}\label{eq: BoundCoeff}
 \lVert c_i(s_0,\varepsilon) \rVert_1:= \sum_j \lvert c_{ij}(s_0,\varepsilon) \rvert \leq 2 \sqrt{D} 
\end{equation}
for $\varepsilon>0$ small enough. Finally we define 
\begin{equation}
 \xi:=\frac{g+4C_\mu \lVert \Phi \rVert_\mu'}{\mu g }=\frac{1}{\mu}+2 \frac{v}{g}
\end{equation}
which will be the relevant length scale for our locality estimates. The second equality shows that it is related to the natural length scales in the system.

\begin{proposition}\label{prop: Weak}
 For every $\mu'< 1/\xi$, there is $\varepsilon_0>0$ so that, for all $\varepsilon<\varepsilon_0$, $s_0\in[0,1-\varepsilon]$, $i=1,\dots,D$, and $l> 0$, there exists a local operator $R_i^l(s_0, \varepsilon)$ with support in the $l$-fattening $K_l$ and satisfying
 \begin{equation}\label{eq: NormBoundProp}
  \bigl\lVert  R_i^l(s_0, \varepsilon) \bigr\rVert \leq \lvert \sigma_{\text{in}}(s_0+\varepsilon)\rvert
 \end{equation}
  and
 \begin{equation}\label{eq: LocalApproxProp}
  \bigl\lVert \bigl(P(s_0+\varepsilon)- R_i^l(s_0, \varepsilon)\bigr) \psi_i(s_0)\bigr\rVert =  \e^{-\mu' l}  O(\varepsilon)
 \end{equation}
\end{proposition}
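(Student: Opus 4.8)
\textbf{Proof strategy for Proposition \ref{prop: Weak}.}
The plan is to build $R_i^l(s_0,\varepsilon)$ by replacing the true spectral projection $P(s_0+\varepsilon)$ appearing in \eqref{eq: SmallStep} by a \emph{local approximant} obtained from a Lieb--Robinson bound, and to use the representation of $P(s_0+\varepsilon)$ via a contour integral (or a quasi-adiabatic-type integral kernel) of the perturbed time evolution. Concretely, write $P(s_0+\varepsilon)$ using the resolvent of $H(s_0+\varepsilon)$ integrated over a contour of length $O(|\sigma_{\mathrm{in}}|+g)$ around $\sigma_{\mathrm{in}}(s_0+\varepsilon)$; equivalently, represent the difference $P(s_0+\varepsilon)-P(s_0)$ through a weighted integral $\int \d t\, W_f(t)\, \tau_t^{H(s_0+\varepsilon)}(\dot H)\,\cdots$ against the true dynamics, where $W_f$ is a rapidly (sub-exponentially) decaying weight with gap $g$. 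Applying the Lieb--Robinson bound to the automorphism $\tau_t^{H(s_0+\varepsilon)}$ acting on the perturbation $W(s)$ (supported in $K$) shows that $P(s_0+\varepsilon)$ differs from its conditional-expectation truncation $R^l := \Pi_{K_l}(P(s_0+\varepsilon))$ (partial trace over $K_l^c$, normalized) by a norm error that is controlled by the Lieb--Robinson weight at distance $l$; choosing the decay rate of $W_f$ just below $1/\xi$ yields an error $\e^{-\mu' l}$ for any $\mu'<1/\xi$. Since the truncation is a completely positive unital-type map it does not increase the norm, so one can arrange $\lVert R^l\rVert\le \lvert \sigma_{\mathrm{in}}(s_0+\varepsilon)\rvert$ (using that $P\,H(s_0+\varepsilon)\,P$ has spectral diameter $\le L\le\lvert\sigma_{\mathrm{in}}\rvert$, or more simply that $P$ itself, shifted, has the right bound).

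The next step is to absorb the coefficients and the unperturbed eigenvectors. Set $R_i^l(s_0,\varepsilon):=\sum_j c_{ij}(s_0,\varepsilon)\, R^l\, \psi_j(s_0)$ — but this is not an operator; so instead one keeps $R^l$ itself as the local operator and writes the desired comparison as
\begin{equation}
\bigl(P(s_0+\varepsilon)-R^l\bigr)\psi_i(s_0+\varepsilon) = \bigl(P(s_0+\varepsilon)-R^l\bigr)\sum_j c_{ij}(s_0,\varepsilon)P(s_0+\varepsilon)\psi_j(s_0),
\end{equation}
and then observes that on the range of $P(s_0+\varepsilon)$ the vector $\psi_i(s_0+\varepsilon)$ is already returned by $P(s_0+\varepsilon)$, so what is actually needed is $\lVert(P(s_0+\varepsilon)-R^l)\psi_i(s_0)\rVert$, which after inserting $\psi_i(s_0)=P(s_0)\psi_i(s_0)$ and $\lVert P(s_0+\varepsilon)-P(s_0)\rVert=O(\varepsilon)$ reduces to $\lVert(P(s_0+\varepsilon)-R^l)P(s_0+\varepsilon)\rVert + O(\varepsilon)$. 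The first term is exactly the Lieb--Robinson truncation error $\e^{-\mu'l}$ times a constant; but we need the product with $O(\varepsilon)$, so one should be slightly more careful and extract one factor of $\varepsilon$ from the spectral-flow integral (the integrand contains $\dot H = \dot W = O(C_W)$ and the $s$-integration runs over a window of length $\varepsilon$, giving an overall $O(\varepsilon)$ prefactor on the whole local-approximation error). This is where the representation of $P(s_0+\varepsilon)-P(s_0)$ (rather than $P(s_0+\varepsilon)$ alone) as an integral against the dynamics is essential: it makes the $O(\varepsilon)$ explicit and simultaneously carries the Lieb--Robinson locality.

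The main obstacle I expect is the bookkeeping that simultaneously achieves \emph{both} the $\e^{-\mu'l}$ decay \emph{and} the $O(\varepsilon)$ smallness in \eqref{eq: LocalApproxProp}, while keeping the norm bound \eqref{eq: NormBoundProp} clean: one must choose the quasi-adiabatic weight function $W_f$ (with its sub-exponential tails) so that, after the Lieb--Robinson estimate, the residual $t$-integral over $|t|\gtrsim l/v$ still converges and produces genuine exponential decay $\e^{-\mu'l}$ for every $\mu'<1/\xi=(\,1/\mu+2v/g\,)^{-1}$ — this is precisely the trade-off between the gap-induced weight $\e^{-gt}$-type factor and the Lieb--Robinson factor $\e^{-\mu(d-vt)}$, optimized at the stated rate $1/\xi$. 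A secondary technical point is verifying that the conditional-expectation truncation really lands in $\caA_{K_l}$ and that its norm is bounded by $\lvert\sigma_{\mathrm{in}}(s_0+\varepsilon)\rvert$ rather than merely by $1$; for that one truncates not $P$ but the (shifted) operator $P\,(H(s_0+\varepsilon)-c)\,P$ for a suitable center $c$ of $\sigma_{\mathrm{in}}$, whose norm is at most $\lvert\sigma_{\mathrm{in}}(s_0+\varepsilon)\rvert/2\le\lvert\sigma_{\mathrm{in}}(s_0+\varepsilon)\rvert$, and uses that the partial trace is norm-nonincreasing. Everything else — linear independence of the $\{P(s_0+\varepsilon)\psi_j(s_0)\}$, the bound \eqref{eq: BoundCoeff}, and the perturbation-theoretic estimate $\lVert P(s_0+\varepsilon)-P(s_0)\rVert=O(\varepsilon)$ — is already in place in the excerpt and is used as a black box.
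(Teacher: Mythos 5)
Your proposal has a genuine gap at its core. You propose to represent $P(s_0+\varepsilon)-P(s_0)$ through a quasi-adiabatic integral against the dynamics with a weight $W_f$ that you yourself describe as having ``sub-exponential tails'', and then claim that ``after the Lieb--Robinson estimate, the residual $t$-integral over $|t|\gtrsim l/v$ still converges and produces genuine exponential decay $\e^{-\mu' l}$.'' This is exactly where the argument fails: for the spectral-flow construction to produce an \emph{exact} projection relation the Fourier transform of $W_f$ must vanish (or saturate) outside the gap, which forces $W_f$ to decay strictly more slowly than any exponential in $t$. Consequently, after applying the Lieb--Robinson bound, the leftover large-time tail of $W_f$ dominates and one only gets sub-exponential locality in $l$ -- precisely the limitation of \cite{BMNS,H} that the paper is designed to overcome. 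Your trade-off ``between the gap-induced weight $\e^{-gt}$-type factor and the Lieb--Robinson factor $\e^{-\mu(d-vt)}$, optimized at the stated rate $1/\xi$'' presupposes an exponential weight, which you cannot have in the quasi-adiabatic framework. There is also a secondary issue: you propose the truncation $R^l:=\Pi_{K_l}(P(s_0+\varepsilon))$, but $P$ itself is not even approximately localized near $K$ (it acts nontrivially on the entire volume), so a direct partial trace of $P$ is not close to $P$ on any reasonable vector; this is why it is essential to truncate an operator that is built from $V=H(s_0+\varepsilon)-H(s_0)$ (supported in $K$) rather than from $P$ directly.

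The paper's route is structurally different. It does \emph{not} use the quasi-adiabatic weight at all. Instead, following \cite{HMNS}, it takes a Gaussian filter
\[
\caP_\lambda=(\alpha/\pi)^{-1/2}\int_\bbR \d t\;\e^{-\alpha t^2}\,\e^{it(H-\lambda)},
\]
which is not the exact projection but satisfies $\bigl\lVert(\caP-P)P_0\bigr\rVert=\e^{-g^2/4\alpha}\,O(\varepsilon)$. Crucially, the width $\alpha$ and the time cutoff $T$ are chosen \emph{as functions of $l$}, $\alpha\sim 1/l$ and $T\sim l$, so that the three error exponents $g^2/4\alpha$, $\alpha T^2$, and $\mu(l-vT)$ all equal $\mu g l/(g+4C_\mu\lVert\Phi\rVert_\mu')$, giving genuine exponential decay in $l$. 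The $O(\varepsilon)$ does not come from an $s$-integration window but from expanding $\e^{it(H-\lambda)}\,\e^{-it(H_0-\lambda)}$ in a Dyson series in $V$: the $n=0$ term is a scalar (hence exactly local and dropping out of all commutators with remote observables), while every $n\ge 1$ term carries at least one factor $\lVert V\rVert=O(\varepsilon)$. It is this scalar $n=0$ term plus the $n\ge 1$ remainder, time-truncated to $|t|\le T$ and then partially traced over $\Lambda\setminus K_l$, that defines $R_i^l$; and the norm bound \eqref{eq: NormBoundProp} follows from $\lVert R_i\rVert\le\sum_{\lambda\in\sigma_{\text{in}}}a_\lambda\le\lvert\sigma_{\text{in}}\rvert$, not from shifting the Hamiltonian. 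The $l$-dependent tuning of the Gaussian filter and the Dyson-series separation of the $\varepsilon$-factor are precisely the ingredients your proposal is missing.
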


\begin{proof} In short, the proof relies on an idea from \cite{HMNS} to approximate spectral projection of gapped spectrum by a Gaussian integral of the time evolution operator, which then is accessible by perturbation theory and the Lieb--Robinson bounds.

First recall that changes of the interaction at single sites as from the perturbation $W(s)$ do not affect the interaction strength $\lVert \Phi \rVert_\mu'$ and in particular not the Lieb--Robinson velocity $v$. Therefore the following holds uniformly throughout the path, i.e., for every path parameter $s_0$. Here we view $V:= H(s_0+\varepsilon)-H(s_0)$ as small perturbation of $H(s_0)$, which is bounded by $\lVert V \rVert \leq \varepsilon C_W$. We also abbreviate $H_0\equiv H(s_0)$, $H\equiv H(s_0+\varepsilon)$ and the sector projections $P_0\equiv P(s_0)$ and $P\equiv P(s_0+\varepsilon)$. For any eigenvalue $\lambda\in\sigma\equiv \sigma(s_0+\varepsilon)$ and for any given parameter $\alpha>0$
and we introduce the operator
\begin{equation}\label{eq: ApproxProj}
 \caP_\lambda:=(\alpha/\pi )^{-1/2} \int_{\bbR}\d t \,\e^{- \alpha t^2} \, \e^{{it(H-\lambda)}}
\end{equation}
By inserting the spectral decomposition
\begin{equation}
 H-\lambda = \sum_{\kappa\in\sigma} (\kappa -\lambda) Q_\kappa
\end{equation}
where $Q_\kappa$ is the spectral projection for $H$ and eigenvalue $\kappa$, and noting that the Fourier transform of Gaussian is again Gaussian, we find that
\begin{equation}
 \caP_\lambda = \sum_{\kappa\in\sigma} \e^{- \frac{1}{4\alpha}(\kappa -\lambda)^2 } Q_\kappa
\end{equation}
Then there is a linear combination of these operators
\begin{equation}
 \caP:= \sum_{\lambda \in \sigma_{\mathrm{in}}} a_{\lambda} \caP_{\lambda}, \qquad 0<a_\lambda < 1 
\end{equation}
where the sum runs over all eigenvalues in the sector $\sigma_{\mathrm{in}}\equiv\sigma_{\mathrm{in}}(s_0+\varepsilon)$, that satisfies (we also set $\sigma_{\mathrm{out}}\equiv \sigma_{\mathrm{out}}(s_0+\varepsilon)$)
\begin{equation}
 \caP - P = \sum_{\kappa\in\sigma_{\mathrm{out}}} \sum_{\lambda\in\sigma_{\mathrm{in}}} \e^{-\frac{1}{4\alpha}(\kappa -\lambda)^2} Q_\kappa
\end{equation}
By spectral perturbation theory $\lVert Q_\kappa P_0 \rVert = O(\varepsilon)$ for $\kappa\in\sigma_{\mathrm{out}}$, and by the gap assumption we therefore obtain
\begin{equation}\label{eq: ApproxProjDiff}
 \bigl\lVert (\caP - P)P_0 \bigr\rVert = \e^{-\frac{1}{4\alpha}g^2}\cdot O( \varepsilon)
\end{equation}

In a next step each exponential in \eqref{eq: ApproxProj} is replaced by its Dyson--Phillips series, which is norm convergent in our finite dimensional setting, 
\begin{equation}
\begin{split}
 \e^{it(H-\lambda)}= \Bigl(\sum_{n=0}^\infty i^n\int_0^t \d t_1 \dots\int_0^{s_{n-1}}\d t_n \tau_{t_n}^0(V)\dots\tau_{t_1}^0(V) \Bigr) \e^{it(H_0-\lambda)}
\end{split}
\end{equation}
Here $\tau^0_t$ indicates the time evolution belonging to the Hamiltonian $H_0$ and the $(n=0)$ term in the sum is taken to be the identity. If $\psi_i^0$ is an eigenvector for an eigenvalue $\lambda_i^0$ in the sector of $H_0$, then we can rewrite
\begin{equation}
 \begin{split}
  \caP \psi_i^0&=\sum_{\lambda \in \sigma_{\mathrm{in}}}a_{\lambda}\,(\alpha/\pi )^{-1/2} \int_{\bbR}\d t \,\e^{- \alpha t^2} \, \e^{it(H-\lambda)}\,\e^{-it(H_0-\lambda)}\,\e^{it(\lambda_i^0-\lambda)}\; \psi_i^0 \\
  &=\sum_{\lambda \in \sigma_{\mathrm{in}}}a_{\lambda}\,(\alpha/\pi )^{-1/2} \int_{\bbR}\d t \, \e^{- \alpha t^2}\,\e^{it(\lambda_i^0-\lambda)}\\
  &\qquad\; \sum_{n=0}^\infty i^n\int_0^t \d t_1 \dots\int_0^{s_{n-1}}\d t_n \tau_{t_n}^0(V)\dots\tau_{t_1}^0(V) \; \psi_i^0\\
  &\equiv\bigl( R_i^{\geq T} + R_i^{\leq T}\bigr) \psi_i^0
 \end{split}
\end{equation}
where we also introduced a parameter $T\geq 0$ and where the operators $R_i^{\geq T}$ and $R_i^{\leq T}$ are defined by restricting the $t$-integration to $\lvert t\rvert\geq T$ and $\lvert t\rvert\leq T$ respectively. The $(n=0)$ term is exclusively added to $R_i^{\leq T}$. 

We now show that both terms in $R_i:=R_i^{\geq T} + R_i^{\leq T}$ can be well approximated by a local operator with support in $K_l$ if we choose
\begin{equation}
 \alpha:=\frac{g(g+4C_\mu \lVert \Phi \rVert_\mu')}{4 \mu l}\quad\text{and}\quad T:=\frac{4C_\mu \lVert \Phi \rVert_\mu' l}{(g+4C_\mu \lVert \Phi \rVert_\mu')v}
\end{equation}
giving the three equalities
\begin{equation}
 \frac{g^2}{4\alpha},\, \alpha T^2,\, \mu(l-vT) = \frac{\mu gl}{g+4C_\mu \lVert \Phi \rVert_\mu'}
\end{equation}
which will be the dominant exponents in the next estimates.

In fact, the effect of $R_i^{\geq T}$ is so small that its locality is irrelevant. A computation gives the following upper bound on its norm,
\begin{equation}
 \begin{split}
  2\sum_{\lambda \in \sigma_{\mathrm{in}}}a_{\lambda}(\alpha/\pi )^{-1/2} \int_{ t \geq T}\d t \, \e^{- \alpha t^2 }\,  t \lVert V \rVert\, \e^{ t \lVert V \rVert}= \e^{-\mu' l}\, O(\varepsilon)
 \end{split}
\end{equation}
for any 
\begin{equation}
 \mu'< \frac{\mu g }{g+4C_\mu \lVert \Phi \rVert_\mu'}
\end{equation}
and $\varepsilon $ small enough.
Concerning the second term $R_i^{\leq T}$, we can make use of the Lieb--Robinson bound at small times $\lvert t \rvert\leq T$ for the time evolution of the perturbation $\tau_t^0(V)$. Let $A\in \caA_X$ be any operator with support $X\subset \Lambda\setminus K_l$, i.e., at least a distance $l$ away from the perturbation in $K$, then its commutator with $R_i^{\leq T}$ is bounded by
\begin{equation}
\begin{split}
 \bigl\lVert\bigl[A,R_i^{\leq T}\bigr]\bigr\rVert&\leq \frac{2 \lVert F_0\rVert}{C_\mu}\lVert A\rVert \,\lvert K\rvert\,\e^{-\mu(l-vT)}\sum_{n=1}^\infty \frac{n}{n!} \bigl( \lVert V\rVert T\bigr)^n  \\
 &= \lVert A \rVert \,\e^{-\mu' l}\, O(\varepsilon)
 \end{split}
\end{equation}
for $\varepsilon$ small enough.
Therefore (see e.g.~\cite{BMNS,NSW}) the normalized partial trace
\begin{equation}
 R_i^{\leq T,l}:= \frac{1}{\mathrm{dim} \caH_{\Lambda\setminus K_l}} \mathrm{Tr}_{\Lambda\setminus K_l}\bigl( R_i^{\leq T}\bigr)
\end{equation}
which clearly has support in $K_l$, satisfies
\begin{equation}
 \bigl\lVert R_i^{\leq T} -R_i^{\leq T,l}\bigr\rVert =\e^{-\mu' l}\, O(\varepsilon)
\end{equation}
By the definition of the constants $\alpha$ and $T$, equation \eqref{eq: ApproxProjDiff} becomes
\begin{equation}
 \bigl\lVert (\caP - P)P_0 \bigr\rVert = \e^{-\mu' l}\, O( \varepsilon)
\end{equation}
Putting it all together we find that \eqref{eq: LocalApproxProp} in the proposition holds for $R_i^l(s_0,\varepsilon):=R_i^{\leq T,l}$ and that \eqref{eq: NormBoundProp} can be obtained from 
\begin{equation}
 \lVert R_i\rVert\leq \sum_{\lambda \in\sigma_{\text{in}}} a_\lambda\leq \lvert \sigma_{\text{in}}(s_0)\rvert
\end{equation}

\end{proof}

\subsection{Iterating the Argument }

A relation between eigenvectors $\psi_i(0)$ and $\psi_i(1)$ from the beginning and end of the path $H(s)$ can be obtained by iterating the step \eqref{eq: SmallStep} from the previous paragraph,
\begin{equation}
\begin{split}
  \psi_{i_n}(1)&=\sum_{i_0, \dots, i_{n-1}}\Bigl( \prod_{k=1}^n c_{i_k i_{k-1}}(k/n,1/n)\Bigr)P(1)\dots P(2/n)P(1/n)\psi_{i_0}(0)
\end{split}
\end{equation}
for $n\geq 1$ large enough (in the same sense as $\varepsilon$ small enough). In each term of the sum, $P(m/n)$ then can be substituted by $R_{i_{m-1}}^l((m-1)/2,1/n)$ as from the Proposition. Together with the bound \eqref{eq: BoundCoeff} on the coefficients $c_{ij}$, the following Theorem is an immediate consequence of the Proposition.  This is the result that motivates the title of this article. 
\begin{theorem} \label{thm: main}
 For a Hamiltonian path $H(s)$ as introduced above and for every $\mu'< 1/\xi$, there is a constant $C \geq 0$, such that, for any $l>0$ and $i=1,\dots,D$, there exist local operators $L_{ij}^l$, $j=1,\dots,D$, with support in the $l$-fattening $K_l$, which take eigenvectors from the sector of $H(0)$ to those of $H(1)$ according to
 \begin{equation}
  \bigl\lVert \psi_i(1)- \textstyle{\sum_j} L^l_{ij} \psi_j(0)  \bigr\rVert\leq C\,\e^{-\mu' l}
 \end{equation}
and which are uniformly bounded by $\lVert L_{ij}^l \rVert \leq C$. 

\end{theorem}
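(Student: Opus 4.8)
The plan is to derive Theorem \ref{thm: main} from Proposition \ref{prop: Weak} by iterating the single-step approximation along a partition of $[0,1]$ into $n$ pieces of length $1/n$, and then controlling the accumulation of errors. First I would fix $\mu'<1/\xi$ and choose $n$ large enough that $1/n<\varepsilon_0$, so that the Proposition applies at every path parameter $s_0=k/n$, $k=0,\dots,n-1$. Expanding \eqref{eq: SmallStep} iteratively yields
\begin{equation}
  \psi_{i_n}(1)=\sum_{i_0,\dots,i_{n-1}}\Bigl(\prod_{k=1}^n c_{i_ki_{k-1}}(k/n,1/n)\Bigr)P(1)P((n-1)/n)\cdots P(1/n)\,\psi_{i_0}(0),
\end{equation}
and the idea is to replace each projection $P(m/n)$ in this product by the local operator $R^l_{i_{m-1}}((m-1)/n,1/n)$ from the Proposition, thereby defining $L^l_{ij}$ as the corresponding sum of products of coefficients and local operators; since each $R^l$ has support in $K_l$ and operators with support in $K_l$ form an algebra, $L^l_{ij}$ has support in $K_l$ as required.

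Next I would carry out the error estimate by a telescoping argument. Writing $A_m:=P(m/n)$ and $B_m:=R^l_{i_{m-1}}((m-1)/n,1/n)$, the difference between $A_n\cdots A_1\psi_{i_0}(0)$ and $B_n\cdots B_1\psi_{i_0}(0)$ telescopes into a sum of $n$ terms, the $m$-th being
\begin{equation}
  B_n\cdots B_{m+1}\,(A_m-B_m)\,A_{m-1}\cdots A_1\,\psi_{i_0}(0).
\end{equation}
Here the key point is that $A_{m-1}\cdots A_1\psi_{i_0}(0)$ is, up to a normalization factor bounded by the coefficient bound \eqref{eq: BoundCoeff}, a sector eigenvector of $H((m-1)/n)$ — or rather a bounded combination of such — so that \eqref{eq: LocalApproxProp} applies and gives $\lVert(A_m-B_m)A_{m-1}\cdots A_1\psi_{i_0}(0)\rVert=\e^{-\mu'l}O(1/n)$. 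The prefactor $\lVert B_n\cdots B_{m+1}\rVert$ is bounded using the norm bound \eqref{eq: NormBoundProp}, namely $\lVert R^l\rVert\leq\lvert\sigma_{\mathrm{in}}\rvert\leq L+g$ or simply $\leq D$ times a constant, uniformly in $n$; so each of the $n$ telescope terms contributes $\e^{-\mu'l}O(1/n)$ and the sum is $\e^{-\mu'l}O(1)$. Finally, summing over the internal indices $i_0,\dots,i_{n-1}$ using $\sum_j\lvert c_{ij}\rvert\leq2\sqrt D$ at each of the $n$ levels produces a further factor $(2\sqrt D)^n$, which is a finite constant since $n$ is fixed; absorbing everything into a single constant $C$ (depending on all the allowed parameters but not on $\Lambda$) yields both the approximation bound and the uniform bound $\lVert L^l_{ij}\rVert\leq C$.

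The main obstacle I anticipate is bookkeeping the index-dependence carefully: in the iterated expansion the local operator $R^l_{i_{m-1}}$ that replaces $P(m/n)$ depends on which eigenvector $\psi_{i_{m-1}}((m-1)/n)$ is being propagated, so the substitution cannot be done ``factor by factor'' in a single operator product but must be done inside the index sum, and one has to make sure that at step $m$ the vector to which $(A_m-B_m)$ is applied really is (a bounded multiple of) a genuine sector eigenvector of $H((m-1)/n)$ rather than an arbitrary vector — otherwise \eqref{eq: LocalApproxProp}, which is stated only for eigenvectors $\psi_i(s_0)$, does not directly apply. This is handled by organizing the telescoping so that one always peels off the \emph{leftmost} discrepancy and the remaining rightmost product $A_{m-1}\cdots A_1\psi_{i_0}(0)$ is recognized, via \eqref{eq: SmallStep} run backwards, as $\sum$ of coefficients times eigenvectors $\psi_{i_{m-1}}((m-1)/n)$ with $\ell^1$-norm of coefficients at most $(2\sqrt D)^{m-1}$. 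Once this is set up the remaining estimates are routine, and the constant $C$ will depend on $n$ (hence on $\mu'$, $g$, $v$, $\mu$, $L$, $C_W$) and on $D$, $\lvert K\rvert$, $\lVert F_0\rVert$, $C_\mu$, but crucially not on the volume $\Lambda$.
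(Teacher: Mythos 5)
Your argument follows the paper's route exactly: iterate \eqref{eq: SmallStep} into the product formula, substitute each $P(m/n)$ by $R_{i_{m-1}}^l((m-1)/n,1/n)$, telescope the difference, and control everything with \eqref{eq: NormBoundProp}--\eqref{eq: BoundCoeff}, absorbing the $n$-dependence into the constant since $n$ is fixed once $\mu'$ is. Your instinct that the crux is applying \eqref{eq: LocalApproxProp} only to genuine sector eigenvectors is also correct.

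One imprecision worth flagging: in the telescoped term $B_n\cdots B_{m+1}(A_m-B_m)A_{m-1}\cdots A_1\psi_{i_0}(0)$ you want $A_{m-1}\cdots A_1\psi_{i_0}(0)$ to be a controlled combination of the eigenvectors $\psi_j((m-1)/n)$, and you cite \eqref{eq: BoundCoeff} for the $\ell^1$ bound on the coefficients. But expanding $P((m-1)/n)\cdots P(1/n)\psi_{i_0}(0)$ in the eigenbasis produces entries of the \emph{inverse} matrices $(c^{(k)})^{-1}$, whereas \eqref{eq: BoundCoeff} bounds the rows of $c^{(k)}$ itself. The clean fix is either (a) note that for $\varepsilon$ small the matrices $c^{(k)}$ are uniformly close to unitary, so $(c^{(k)})^{-1}$ enjoys the analogous $\ell^1$ bound (a one-line perturbation-theory addendum, not stated in the paper), or (b) avoid $c^{-1}$ entirely: sum the explicit coefficients $\prod_{k=1}^{m-1}c_{i_ki_{k-1}}$ over $i_0,\dots,i_{m-2}$ \emph{before} applying the Proposition, which by the forward iteration of \eqref{eq: SmallStep} collapses $\sum_{i_0,\dots,i_{m-2}}\bigl(\prod_{k=1}^{m-1}c_{i_ki_{k-1}}\bigr)P((m-1)/n)\cdots P(1/n)\psi_{i_0}(0)$ to a single $\psi_{i_{m-1}}((m-1)/n)$, leaving only the coefficients with indices $i_{m-1},\dots,i_{n-1}$ for which \eqref{eq: BoundCoeff} applies as stated. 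Route (b) is almost certainly what the paper's terse ``immediate consequence'' intends. Either way the structure of your proof is sound and matches the paper.
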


Note again that the constant $C$ may grow with the number of steps $n$ needed to follow the path in sufficiently small steps of size $\varepsilon=1/n$ (see the Proposition) and therefore may depend on all the model parameters.

\newpage

\section{Applications}\label{sec: applications}
{We present two examples where the above results give a useful insight. }

\subsection{Impurities in Systems with Unique Ground State}

Let $\Phi \in \mathscr{B}_\mu(\Gamma)$ be as above. However, we now assume additionally that each Hamiltonian $H^\Phi_\Lambda$ has a spectral gap $\gamma>0$ above a non-degenerate ground state energy. Let $\psi^{gs}$ be a normalized ground state vector, which is hence unique up to a phase factor. 
To each perturbation site $k\in K$ we attach a finite dimensional Hilbert space $\caI_k$ depicting the degrees of freedom of an impurity. In other words we redefine the single-site Hilbert space to become $\caH_k \otimes \caI_k$. The interaction $\Phi$ will be naturally viewed as family of operators on the larger space which act trivially on $\caI:= \bigotimes_k \caI_k$. The energy of each impurity and its coupling to the rest of the system is described by a self-adjoint operator $W_k$ on $\caH_k \otimes \caI_k$. To put ourselves in the context of Proposition \ref{prop: Weak}, we assume that $H\equiv H_\Lambda^{\Phi}$ can be connected to
\begin{equation}
 H':=H + W := H +\sum_k W_k
\end{equation}
by a smooth path of gapped Hamiltonians satisfying the \emph{main assumption} of Section \ref{sec: GappedSectors} with a $\dim(\caI)=D$ dimensional sector. In particular, everything holds uniformly in the volume $\Lambda$ which is therefore suppressed in the notation. By basic perturbation theory, the \emph{main assumption} is surely satisfied if the norm of $W$ is smaller than $\gamma$.   If all $\caI_k\cong \bbC$ are merely one dimensional Hilbert spaces, the overall Hilbert space and dimension of the ground state sector is not extended at all. Our results then effectively describe the influence of local perturbations on systems with unique gapped ground state, which was studied in detail in \cite{HMNS}. 

Let $\{\psi'_{i}\}$, $i=1,\dots,D$, be an orthonormal set of eigenvectors from the sector of $H'$. Let $P'$ be the orthogonal sector projection for $H'$ and $P$ the one for the sector of $H$. Note that the range of $P$ is the subspace of product vectors of the form $\psi^{gs}\otimes \phi$ with $\phi \in \caI$.

\begin{proposition}\label{prop: Impurity1}
For every $\mu'< 1/\xi$, there is a constant $C > 0$, such that for every length $l> 0$ and local observable $A$ with support outside of the $l$-fattening $K_l$
 \begin{equation}\label{eq: PropImpurity1}
  \bigl\lvert \bigl(\psi_i',A\psi_i'\bigr) - \bigl(\psi^{gs} ,A \psi^{gs}\bigr) \bigr\rvert \leq C \lvert \partial^\Phi K_{l/2}\rvert \lVert A\rVert \, \e^{-\mu' l}\qquad i=1,\dots, D
 \end{equation}
For every $l\geq 0$ there furthermore exists a local operator $T_l$ with support within $K_l$ which transforms the sector projection with exponential accuracy
\begin{equation}\label{eq: TrafoProj}
 \bigl\lVert P' -T_l^* P T_l^{} \bigr\rVert \leq C \, \e ^{-\mu' l}
\end{equation}
The norm of $T_l$ is bounded by $C$.
\end{proposition}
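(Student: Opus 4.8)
The plan is to deduce Proposition \ref{prop: Impurity1} from the main Theorem \ref{thm: main} together with the exponential clustering property of Theorem \ref{thm: ExpCluster}. I would take $H(0)=H$ and $H(1)=H'$ as the endpoints of the path in the \emph{main assumption}. By Theorem \ref{thm: main}, applied at the length scale $l/2$, there are local operators $L_{ij}^{l/2}$ supported in $K_{l/2}$ and of norm at most $C$ such that $\bigl\lVert \psi_i' - \sum_j L_{ij}^{l/2}\psi_j\bigr\rVert\leq C\e^{-\mu' l/2}$. Here the $\psi_j$ are the sector eigenvectors of $H$, which by the remark preceding the proposition are of the form $\psi^{gs}\otimes\phi_j$ for an orthonormal basis $\{\phi_j\}$ of $\caI$. (To obtain decay $\e^{-\mu' l}$ rather than $\e^{-\mu' l/2}$ in the final statement, one simply renames $\mu'$; alternatively run the argument at scale $l$ and replace $l/2$ by $l$ throughout — the $l/2$ appears only because of the $\partial^\Phi K_{l/2}$ factor below, and I will follow the phrasing in \eqref{eq: PropImpurity1}.)

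For the expectation-value bound \eqref{eq: PropImpurity1}: write $\psi_i' = \Psi_i + e_i$ with $\Psi_i:=\sum_j L_{ij}\psi_j$ (dropping superscripts) and $\lVert e_i\rVert = \e^{-\mu' l}\,O(1)$. Then $(\psi_i',A\psi_i') = (\Psi_i,A\Psi_i) + O(\lVert A\rVert\,\e^{-\mu' l})$ since $A$ is bounded and the $\Psi_i$ have norm bounded by a constant. Now $A$ is supported outside $K_l$ while each $L_{ij}$ is supported in $K_{l/2}$, and the two regions are separated by distance at least $l/2$; moreover $A$ commutes with the impurity factor $\caI$. Using that $\psi_j=\psi^{gs}\otimes\phi_j$ and writing $L_{ij}=\sum_k L_{ij}^{(k)}$ over a fixed finite set of tensor components acting as $B\otimes C$ with $B\in\caA_{K_{l/2}\cap\Gamma}$ and $C$ on $\caI$, I would reduce $(\Psi_i,A\Psi_i)$ to a sum of ground-state correlators of the form $(\psi^{gs}, B^* A B'\,\psi^{gs})$. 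Since $B,B'$ are supported in $K_{l/2}$ and $A$ outside $K_l$, Theorem \ref{thm: ExpCluster} gives $(\psi^{gs}, B^* A B'\psi^{gs}) = (\psi^{gs},B^*B'\psi^{gs})(\psi^{gs},A\psi^{gs}) + (\text{error } \e^{-cl})$, where the factor $\lvert\partial^\Phi K_{l/2}\rvert$ enters through the prefactor in the clustering estimate. Recollecting the leading terms: $\sum$ over the impurity indices of $(\phi_i, C^* C'\phi_i)(\psi^{gs},B^*B'\psi^{gs})$ rebuilds $(\psi_i', \psi_i') = 1$ up to $\e^{-\mu' l}$, so the leading term is exactly $(\psi^{gs},A\psi^{gs})$ and everything else is absorbed into the error. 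This is the step requiring the most care — the bookkeeping of tensor components and making sure the ``normalization'' terms reassemble to $1$ with only exponentially small slack — but it is routine given the two cited theorems.

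For \eqref{eq: TrafoProj}: I would not go through eigenvectors at all but instead re-run the one-step Proposition \ref{prop: Weak} at the level of projections, or rather observe that the construction in the proof of Proposition \ref{prop: Weak} already produces, after iteration, a local approximant to the map $P\mapsto P'$. Concretely, from the local operators $R_i^l$ of Proposition \ref{prop: Weak} one assembles a single local operator $T_l$ supported in $K_l$ (a product over the $n$ path-steps of the corresponding $R$-operators, arranged so that $T_l$ maps $\mathrm{ran}\,P$ approximately isometrically into $\mathrm{ran}\,P'$); then $T_l^* P T_l$ approximates $P'$ because $T_l\psi_j \approx \psi_j'$ for the basis vectors and the $\psi_j'$ are orthonormal. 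One has $\lVert P' - T_l^* P T_l\rVert \leq \sum_{ij}$ of inner-product errors of size $\e^{-\mu' l}\,O(1)$, and the norm bound $\lVert T_l\rVert\leq C$ is inherited from the uniform bounds $\lVert L_{ij}^l\rVert\leq C$ (equivalently $\lVert R_i\rVert \leq \lvert\sigma_{\mathrm{in}}\rvert$) in the Proposition and Theorem. The main obstacle I anticipate is purely organizational: choosing $T_l$ so that it is \emph{exactly} supported in $K_l$ (use the partial-trace trick from the proof of Proposition \ref{prop: Weak}) while keeping $T_l^* P T_l$ within $\e^{-\mu' l}$ of the genuine projector $P'$ — i.e.\ controlling that $T_l^* T_l$ acts as the identity on $\mathrm{ran}\,P$ up to exponentially small corrections, which again follows by feeding the eigenvector estimates of Theorem \ref{thm: main} back in, using that the $\{\psi_i(0)\}$ and $\{\psi_i(1)\}$ are orthonormal.
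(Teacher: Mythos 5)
Your proposal is correct and follows essentially the paper's own route: Theorem~\ref{thm: main} at scale $l/2$, the product structure $\psi_j=\psi^{gs}\otimes\phi_j$ of the unperturbed sector, and exponential clustering (Theorem~\ref{thm: ExpCluster}), with $\lVert\psi_i'\rVert=1$ supplying the leading term $(\psi^{gs},A\psi^{gs})$. The bookkeeping you leave vague --- the Schmidt-type decomposition of $L_{ij}$ in part one and the assembly of a single $T_l$ ``arranged so that\ldots'' in part two --- the paper handles in one stroke by introducing the impurity transition operators $I_{ji}=\lvert\phi_j\rangle\langle\phi_i\rvert\in\caA_K$ and writing $\psi_i'\approx\sum_j L_{ij}^l I_{ji}\psi_i$, so that $T_l:=\sum_{ij}L_{ij}^l I_{ji}$ satisfies $T_l\psi_m\approx\psi_m'$ and hence gives the projection estimate directly (your ``product over path-steps'' is already encoded in the $L_{ij}^l$ of Theorem~\ref{thm: main}; the missing step is exactly this contraction with $I_{ji}$).
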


The Proposition can be proven by combining Theorem \ref{thm: main} with the exponential clustering Theorem for unique ground states, which we include here for the reader's convenience:

\begin{theorem}[\cite{NS07}]\label{thm: ExpCluster}
 Let $\mu>0$ and $\Phi\in \mathscr{B}_\mu(\Gamma)$ as before. Given $g>0$ there exist constants $\mu', c > 0$ such that, if $H_\Lambda^\Phi$ has a $g$-gapped ground state energy for a volume $\Lambda$ and if $\Omega$ is a normalized ground state vector of $H_\Lambda^\Phi$, then 
 \begin{equation}\label{eq: ExpCluster}
  \bigl\lvert (\Omega, AB \,\Omega )\bigr\rvert\leq c \lVert A\rVert \lVert B\rVert \min\bigl\{\lvert \partial^\Phi X \rvert, \lvert \partial^\Phi Y \rvert\bigr\}\e^{-\mu' d(X,Y)}
 \end{equation}
for all local observables $A,B$ with support in $X,Y$, which satisfy $P A \Omega=PA^*\Omega=0$ with $P$ being the spectral projection for the ground state subspace. The decay parameter can be chosen as 
 \begin{equation}
  \mu'=\frac{\mu g}{g+ 4 C_\mu\lVert \Phi\rVert_\mu'}
 \end{equation}
\end{theorem}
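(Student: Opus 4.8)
The plan is to follow the Fourier‑analytic argument of \cite{NS07}: the spectral gap is read off from the Fourier support of the ground‑state correlation function, the distance $d:=d(X,Y)$ is read off from the Lieb--Robinson bound of Section~\ref{sec: LiebRobinson}, and the two effects are balanced by optimising a Gaussian smearing width. Write $H\equiv H_\Lambda^\Phi$; after subtracting a constant we may assume the ground‑state energy is $0$, so that $\e^{-itH}\Omega=\Omega$ (and $\tau_t$ is unaffected). We may also assume $d\ge1$, since for $d=0$ the estimate is immediate from $\lvert(\Omega,AB\Omega)\rvert\le\lVert A\rVert\lVert B\rVert$ by taking $c$ large.

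First I would record two structural consequences of the hypotheses $PA\Omega=PA^*\Omega=0$. Let $P_0=P,P_1,P_2,\dots$ be the spectral projections of $H$, with eigenvalues $E_0=0$ and $E_n\ge g$ for $n\ge1$ (here the gap assumption enters). Using $\e^{-itH}\Omega=\Omega$,
\[
 \bigl(\Omega,A\tau_t(B)\Omega\bigr)=\bigl(\Omega,A\e^{itH}B\Omega\bigr)=\sum_{n\ge0}\e^{itE_n}\bigl(\Omega,AP_nB\Omega\bigr),
\]
and the $n=0$ term vanishes because $(\Omega,APB\Omega)=(PA^*\Omega,B\Omega)=0$. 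Hence $t\mapsto(\Omega,A\tau_t(B)\Omega)$ has Fourier transform supported in $[g,\infty)$, and at $t=0$ it equals $(\Omega,A(1-P)B\Omega)=(\Omega,AB\Omega)$. Symmetrically, $(\Omega,\tau_t(B)A\Omega)=\sum_{n\ge0}\e^{-itE_n}(\Omega,BP_nA\Omega)$ has Fourier transform supported in $(-\infty,-g]$, its $n=0$ term now vanishing because $PA\Omega=0$.

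Next I would introduce a smearing kernel $w_\alpha$ ($\alpha>0$) whose Fourier transform is a Gaussian‑regularised Heaviside step at the gap --- for instance $\widehat{w_\alpha}(\omega)=\tfrac12\operatorname{erfc}(\omega/\sqrt{4\alpha})$, so that $\widehat{w_\alpha}$ differs from $1$ on $(-\infty,-g]$ and from $0$ on $[g,\infty)$ by at most $\tfrac12\e^{-g^2/(4\alpha)}$, while $w_\alpha(t)$ is a constant multiple of $t^{-1}\e^{-\alpha t^2}$ and so decays in $t$ like a Gaussian. Pairing $w_\alpha$ with the two expansions (via $\int w_\alpha(t)\e^{\pm itE}\,\d t=\widehat{w_\alpha}(\mp E)$) and controlling the arising spectral tails by $\sum_n\lvert(\Omega,AP_nB\Omega)\rvert\le\lVert A\rVert\lVert B\rVert$ (Cauchy--Schwarz): the first expansion expresses $(\Omega,AB\Omega)$ as $\int_{\bbR}w_\alpha(t)(\Omega,A\tau_t(B)\Omega)\,\d t$ up to an error $O(\e^{-g^2/(4\alpha)}\lVert A\rVert\lVert B\rVert)$, while the second shows $\int_{\bbR}w_\alpha(t)(\Omega,\tau_t(B)A\Omega)\,\d t$ is itself of that size; subtracting,
\[
 (\Omega,AB\Omega)=\int_{\bbR}w_\alpha(t)\,\bigl(\Omega,[A,\tau_t(B)]\Omega\bigr)\,\d t+\mathrm{err},\qquad\lvert\mathrm{err}\rvert\le 2\,\e^{-g^2/(4\alpha)}\lVert A\rVert\lVert B\rVert.
\]

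Finally I would estimate the commutator integral by the Lieb--Robinson bound,
\[
 \bigl\lvert\bigl(\Omega,[A,\tau_t(B)]\Omega\bigr)\bigr\rvert\le\frac{2\lVert F_0\rVert}{C_\mu}\,\lVert A\rVert\lVert B\rVert\,\min\{|\partial^\Phi X|,|\partial^\Phi Y|\}\,\e^{-\mu(d-v|t|)},
\]
splitting the $t$‑integral at $|t|=T$: the range $|t|\le T$ contributes the factor $\e^{-\mu(d-vT)}$ --- the $t^{-1}$ singularity of $w_\alpha$ at the origin being harmless because $(\Omega,[A,\tau_t(B)]\Omega)$ vanishes at $t=0$ (as $[A,B]=0$ for $d\ge1$), with difference quotient again controlled by Lieb--Robinson --- while the range $|t|>T$ contributes $\e^{-\alpha T^2}$ up to a bounded factor, by the Gaussian tail of $w_\alpha$ against $\lVert[A,\tau_t(B)]\rVert\le 2\lVert A\rVert\lVert B\rVert$. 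This gives
\[
 \bigl\lvert(\Omega,AB\Omega)\bigr\rvert\le c\,\lVert A\rVert\lVert B\rVert\,\min\{|\partial^\Phi X|,|\partial^\Phi Y|\}\bigl(\e^{-\mu(d-vT)}+\e^{-\alpha T^2}+\e^{-g^2/(4\alpha)}\bigr),
\]
and the choice $T=2\mu'd/g$, $\alpha=g^2/(4\mu'd)$ with $\mu'=\mu g/(g+2\mu v)=\mu g/(g+4C_\mu\lVert\Phi\rVert_\mu')$ equalises all three exponents at $\mu'd$, yielding the assertion (the residual factor $(\mu'd)^{-1}\le(\mu')^{-1}$ being absorbed into $c$). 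The step I expect to be the real obstacle is the quantitative construction of $w_\alpha$ --- in particular, handling its $t^{-1}$ singularity at $t=0$ without losing the linear dependence on $\min\{|\partial^\Phi X|,|\partial^\Phi Y|\}$ or the volume‑independence of $c$, which forces one to exploit that $(\Omega,[A,\tau_t(B)]\Omega)$ vanishes at $t=0$ at a rate that is itself exponentially small in $d$; the exponent optimisation that follows is routine.
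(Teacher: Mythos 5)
The paper does not prove Theorem~\ref{thm: ExpCluster}; it cites it from \cite{NS07} (the argument originates in \cite{NS06}). Your reconstruction follows precisely that Fourier/Lieb--Robinson argument, and the exponent bookkeeping $\mu'=\mu g/(g+2\mu v)=\mu g/(g+4C_\mu\lVert\Phi\rVert_\mu')$ with $T=2\mu' d/g$, $\alpha=g^2/(4\mu' d)$ is exactly right. Two small remarks. First, $w_\alpha$ is not just a multiple of $t^{-1}\e^{-\alpha t^2}$: inverting $\tfrac12\operatorname{erfc}(\omega/\sqrt{4\alpha})=\tfrac12-\tfrac12\operatorname{erf}(\omega/\sqrt{4\alpha})$ produces $w_\alpha(t)=\tfrac12\delta(t)+\mathrm{const}\cdot t^{-1}\e^{-\alpha t^2}$; the delta term is harmless here precisely because it hits $(\Omega,[A,B]\Omega)=0$ when $d(X,Y)\geq 1$, but it should be acknowledged, since the logic of the intermediate pairing step lives on the Fourier side $\widehat{w_\alpha}$, not on your $t$-space description. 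Second, the obstacle you flag at the end is the genuine one and your proposed remedy works: writing $g(t):=(\Omega,[A,\tau_t(B)]\Omega)$ with $g(0)=0$, one has $g'(t)=i(\Omega,[A,\tau_t([H,B])]\Omega)=-i(\Omega,[[H,A],\tau_t(B)]\Omega)$ (using $(\Omega,[H,\cdot]\Omega)=0$ via the Jacobi identity), so the Lieb--Robinson bound applies to $g'$ as well and the $\min\{|\partial^\Phi X|,|\partial^\Phi Y|\}$ structure survives --- at the cost of an extra boundary factor of the type $\sum_{Z\cap Y\neq\emptyset}\lVert\Phi(Z)\rVert$, which is controlled by $\lVert\Phi\rVert_\mu$ and independent of $\Lambda$. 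With $g(0)=0$ and this $g'$ bound the near‑$t=0$ part of the $t^{-1}$ integral is $O(\e^{-\mu(d-vT)})$ rather than producing a spurious logarithm, and the remaining polynomial prefactors are absorbed into $c$ as you say. So the proposal is a correct rendition of the cited proof; the only thing missing is the explicit $g'$ estimate.
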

 
This type of Theorem dates back to \cite{H3,HK,NS}. In case that the ground state energy is a non-degenerate eigenvalue of $H_\Lambda^\Phi$ (as is relevant for the present proof), we can drop the condition on $A$ and instead of \eqref{eq: ExpCluster} we obtain
 \begin{equation}
  \bigl\lvert (\Omega ,AB\Omega )-(\Omega,A\Omega)(\Omega,B\Omega) \bigr\rvert\leq c \lVert A\rVert \lVert B\rVert \min\bigl\{\lvert \partial^\Phi X \rvert, \lvert \partial^\Phi Y \rvert\bigr\}\e^{-\mu' d(X,Y)}
 \end{equation}
 for all observables $A,B$ with support in $X,Y$.\\

The sector projection can be approximated as in \eqref{eq: TrafoProj} because of the simple product structure of vectors in the unperturbed sector, but this does not seem to be implied by the Theorem in general. However, recall once again that in general \eqref{eq: TrafoProj} holds with a sub-exponential bound by the spectral flow technique, see \eqref{eq: SpectralFlow}. In physically relevant models $\lvert \partial^\Phi K_{l/2}\rvert$ usually grows only polynomially in $l$, so that this factor in \eqref{eq: PropImpurity1} can be eliminated by choosing a slightly smaller decay parameter $\mu'$.

\begin{proof}[Proof of Proposition \ref{prop: Impurity1}]
First we choose an orthonormal basis $\{\psi_i\}$ in the ground state subspace of the unperturbed Hamiltonian $H$ of the form
$  \psi_i=\psi^{gs} \otimes \phi_i$, $i=1,\dots,D$,
with $\phi_i\in\caI$ and in Dirac's notation we define the local operators $I_{ij}=\lvert \phi_i \rangle \langle \phi_j \rvert$ acting non-trivially on $\caI$.
By Theorem \ref{thm: main} there exist, for any length $l\geq 0$, local operators $L_{ij}^l$ with support in $K_l$, so that the eigenvectors $\psi_i'$ in the system coupled to impurities satisfy
\begin{equation}\label{eq: Transform}
 \Bigl\lVert \psi_i' - \sum_j L_{ij}^l I_{ji}^{} \psi_i^{} \Bigr\rVert \leq \text{(const.)}\, \e^{-\mu' l}
\end{equation}
where here and in the following `(const.)' stands for a constant that may change from line to line, but may only depend on the fixed model parameters (not the volume). The greatest of these possibly different constants qualifies as the constant $C$ in the Proposition. $\mu'$ is of course taken from the Theorem. Therefore the perturbed sector projection can be approximated according to
\begin{equation}
 \Bigl\lVert P' - \sum_{i,p, q} L_{ip}^l I_{pi}^{} P I_{iq}^{} L_{iq}^{l*} \Bigr\rVert \leq \text{(const.)}\, \e^{-\mu' l}
\end{equation}
and we arrive at \eqref{eq: TrafoProj} with
\begin{equation}
 T_l:= \sum_{ij} L_{ij}^l I_{ji}^{}
\end{equation}
Concerning the first part of the Proposition note that
\begin{equation}\label{eq: TrafoState}
 \Bigl\lvert \bigl(\psi'_i,A \psi_i'\bigr)-\sum_{p ,q} \bigl( \psi_{m}^{},A I_{m p}^{} L_{i p}^{l/2*}  L_{i q}^{l/2} I_{ q m}^{} \psi_{m}^{}\bigr)\Bigr\rvert\leq \text{(const.)}\, \e^{-\mu' l/2}
\end{equation}
again with the aid of Theorem \ref{thm: main}. By the exponential clustering Theorem (Theorem \ref{thm: ExpCluster})
\begin{equation}
\begin{split}
  &\bigl\lvert \bigl( \psi_{m}^{},A I_{m p}^{} L_{i p}^{l/2*}  L_{i q}^{l/2} I_{ q m}^{} \psi_{m}^{}\bigr) - \bigl( \psi_{m}^{},A  \psi_{m}^{}\bigr)\bigl( \psi_{m}^{}, I_{m p }^{} L_{i p}^{l/2*}  L_{i q}^{l/2} I_{q m}^{} \psi_{m}^{}\bigr)\bigr\rvert\\
  &\leq \text{(const.)}\,\lvert \partial^\Phi K_{l/2} \rvert \lVert A\rVert\,\e^{-\mu' l}
\end{split}
\end{equation}
for any $m=1,\dots,D$ and a $\mu'>0$ as specified in the Proposition, which finishes the proof.

\end{proof}

\subsubsection{Exponential Decay of Correlations} 
Just above we stated and used the well-known fact that gapped unique ground states arising from local interactions exhibit exponential decay of spatial (truncated) correlations. The same was shown to hold for states defined through spectral projections on gapped sectors of eigenvalues which approach the ground state energy in the thermodynamic limit \cite{HK, NS06}. For our model with a single impurity we show a (modified) exponential clustering property for every state defined by an eigenvector from the low-lying sector. In particular it holds for any ground state (in finite volume) independent of the gap to the energy of the next highest excitation, which may be small but non-vanishing for all volumes. This is not surprising since we just showed that each sector state is identical to a gapped unique ground state away from the impurity and since, intuitively, one would expect the correlation length to be determined by bulk properties, e.g.~a kind of `bulk gap', of the system.  

For two subsets $X, Y \subset \Lambda$, which will be the supports of observables $A$ and $B$, and depending on the perturbation set $K$, we define an effective distance $d_K(X,Y)$ as the smallest number $l\geq 0$ such that $X_l\cup Y_l \cup K_l$ has a connected component containing both $X$ and $Y$. This distance hence tends to be shorter if an impurity is located in between the supports $X$ and $Y$. Our next result shows that correlations decay exponentially in this effective distance for a single impurity $K=\{k\}$ and we assume that $\lvert \partial^\Phi K_l \rvert$ increases at most polynomially in $l$. 


\begin{proposition}\label{prop: Impurity2}
 If $\lvert K\rvert =1$ and for every $\mu'<1/\xi$, there is a constant $C > 0$, so that, for every pair of observables $A,B$ with support in $X,Y$ 
 \begin{equation}
  \bigl\lvert \bigl(\psi'_i ,A B\psi_i'\bigr)-\bigl(\psi'_i ,A\psi_i'\bigr)\bigl(\psi'_i ,B\psi_i'\bigr)\bigr\rvert \leq C\,\lVert A \rVert \lVert B\rVert \e^{-\mu' d_{K}(X,Y)} 
 \end{equation}
 for $i=1,\dots,D$.
\end{proposition}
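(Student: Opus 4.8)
The plan is to reduce Proposition \ref{prop: Impurity2} to the exponential clustering Theorem \ref{thm: ExpCluster} for the unperturbed unique ground state $\psi^{gs}$, using the transformation of eigenvectors from Theorem \ref{thm: main} to localize everything near the single impurity $k$. Set $l:=d_K(X,Y)$. The key geometric observation is that by the definition of $d_K$, the set $X_{l}\cup Y_l\cup K_l$ has a connected component containing both $X$ and $Y$; going one notch below, for the effective distance to equal $l$ means that at scale slightly less than $l$ no such component exists. I would exploit this by choosing an intermediate fattening: pick $r<l$ (say $r$ close to $l$, e.g.\ $r= l - o(l)$, or more simply run the argument with $r = l/3$ and absorb constants) so that $X_r$, $Y_r$, $K_r$ are mutually far apart in the relevant sense, and write the correlator after replacing the $\psi_i'$ by $\sum_j L_{ij}^r I_{ji}\psi_i$ with an error $O(\e^{-\mu' r})$ coming from Theorem \ref{thm: main} (exactly as in \eqref{eq: Transform} in the proof of Proposition \ref{prop: Impurity1}), where $L_{ij}^r$ is supported in $K_r$.

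The second step is the algebra: expanding $(\psi_i',AB\psi_i') - (\psi_i',A\psi_i')(\psi_i',B\psi_i')$ after this substitution produces expectations in the product state $\psi_m = \psi^{gs}\otimes\phi_m$ of operators of the form $A\cdot(\text{stuff supported in } K_r)\cdot B$ and, in the disconnected terms, products of such expectations. Since $\psi^{gs}$ is a non-degenerate gapped ground state, I would invoke the truncated form of exponential clustering stated right after Theorem \ref{thm: ExpCluster}, i.e.\ $|(\Omega,AB\Omega)-(\Omega,A\Omega)(\Omega,B\Omega)|\le c\lVert A\rVert\lVert B\rVert\min\{|\partial^\Phi X|,|\partial^\Phi Y|\}\e^{-\mu' d(X,Y)}$, applied with one observable being $A$ (supported in $X$, at distance $\ge r$ from $K_r\supset Y'$ after regrouping) and the other being the composite operator $L_{ip}^{r*}L_{iq}^r$ times $B$, supported in $K_r\cup Y$. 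Because $K=\{k\}$ is a single site and $|\partial^\Phi K_r|$ grows at most polynomially in $r$ by assumption, the prefactor $\min\{|\partial^\Phi X|,|\partial^\Phi(K_r\cup Y)|\}$ is at most polynomial in $r$ (after possibly also using $|\partial^\Phi X|$ if $X$ is the smaller side), and can be absorbed into a slightly smaller decay rate $\mu''<\mu'$. Careful bookkeeping is needed to make sure each application of clustering has the two observables genuinely separated by distance of order $r$ on the graph; this is where the definition of $d_K$ via connected components is used — if $X$ and $Y$ can only be joined through $K$ at scale $l$, then at scale $r<l$ either $X$ is far from $K_r\cup Y$, or $Y$ is far from $K_r\cup X$, and one applies clustering accordingly (splitting into cases, or symmetrizing).

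The main obstacle I anticipate is the geometric/combinatorial bookkeeping of distances: one must verify that after absorbing the impurity-localized operators $L_{ij}^r$ into the support of $A$ or $B$, the two groups are still separated by a distance comparable to $l$ on the original graph $\Gamma$ (or on the contracted graph where $K$ is a point), so that the exponent $\e^{-\mu' \cdot(\text{distance})}$ one gets from clustering is indeed $\e^{-\mu'' d_K(X,Y)}$. A subtlety is that $d_K$ is defined to be short precisely when the impurity sits between $X$ and $Y$, yet clustering for $\psi^{gs}$ only sees the ordinary graph distance; the resolution is that once $L_{ij}^r$ is lumped with, say, $B$ and $K_r$, the enlarged support $K_r\cup Y$ is at ordinary distance $\ge r$ from $X$ exactly when $X$ and $Y\cup K$ are not connected at scale $r$ — which is guaranteed by $r<l=d_K(X,Y)$ in the case that $Y$ alone (without going through $K$) is already far, and a symmetric statement handles the other case. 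Modulo this case analysis and the polynomial absorption of the $|\partial^\Phi K_r|$ factor, the estimate follows by the triangle inequality collecting the $O(\e^{-\mu' r})$ transformation error and the clustering bound, with $r$ chosen a fixed fraction of $l$.
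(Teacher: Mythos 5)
Your proposal is correct and follows essentially the same route as the paper: replace $\psi_i'$ by $\sum_j L_{ij}^r I_{ji}\psi_i$ using Theorem~\ref{thm: main} at a scale $r$ a fixed fraction of $d_K(X,Y)$, absorb the impurity-localized operators $L$ into whichever of $A,B$ is on the near side of $K$ (case analysis justified by the definition of $d_K$), and apply the truncated exponential clustering for $\psi^{gs}$ together with the assumed polynomial growth of $\lvert\partial^\Phi K_r\rvert$. The paper's own proof uses $r=d_K(X,Y)/2$, groups the $L$'s with $A$ (``let's say $B$'' commutes), and is terser on the geometric bookkeeping that you spell out, but the decomposition and the key lemma invoked are identical.
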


\begin{proof}
 If $d(X,Y)$ is smaller than or equal to the distance of $X$ and $Y$ to the perturbation set $K$, then exponential clustering for $\omega'_i$ is an immediate consequence of Proposition \ref{prop: Impurity1} knowing that $\omega^{gs}$ has this property, see Theorem \ref{thm: ExpCluster}. More generally, we use again that $\psi_i' \approx \sum_j L_{ij}^l I_{ji}^{} \psi_i^{}$ as in \eqref{eq: Transform} and that $L_{ij}^l$ commutes with either $A$ and/or $B$, let's say $B$, for $l=d_K(X,Y)/2$. In the same way that \eqref{eq: TrafoState} was obtained in the previous proof we arrive at
 \begin{equation}
 \Bigl\lvert \bigl(\psi'_i,A B \psi_i'\bigr)-\sum_{p ,q} \bigl( \psi_{m}^{}, I_{m p}^{} L_{i p}^{l*} A L_{i q}^{l} I_{ q m}^{}B \psi_{m}^{}\bigr)\Bigr\rvert\leq \text{(const.)}\, \e^{-\mu' l}
\end{equation}
Since the distance $d(X\cup K_l,Y)$ between the support of each $I_{m p}^{} L_{i p}^{l*} A L_{i q}^{l} I_{ q m}^{}$ and of $B$ is at least $d_K(X,Y)$ and since we assumed that $\lvert \partial^\Phi K_l\rvert$ grows slower than exponentially in $l$, the exponential clustering Theorem can be used to finish the proof.
 
\end{proof}

For \emph{several impurities} $\lvert K \rvert> 1$, the different regions around them may be correlated for arbitrary distances for states from (entangled) eigenvectors already due to possible degeneracies in the spectrum. This can of course already be seen in the uncoupled case with $W=0$. For the average of sector states $\omega(\cdot)=\frac{1}{D}\tr(P\,\cdot\,)$ however, truncated correlations still decay exponentially in the distance between observables as a consequence of the exponential clustering Theorem applied to the unique ground state in the bulk. If the impurities can be coupled to the system one after another along a gapped path, one might also expect exponential clustering to hold for $\omega'(\cdot)=\frac{1}{D}\tr(P'\,\cdot\,)$ as above in Proposition \ref{prop: Impurity2} (i.e.~with respect to distance $d_{K}$), but we are not able to show it. 

Also to clarify this statement, we briefly point out that in this case `sub-exponential' decay of correlations can be easily obtained with the help of the spectral flow technique, indicating the deficiency in Theorem \ref{thm: main} of not being implemented by a unitary transformation. The set of impurity sites can be split into two parts $K=K^X\cup K^Y$ of sites close to $X$ and close to $Y$ in the sense that, for $l=d_K(X,Y)/2$, their fattened sets satisfy 
\begin{equation}
 d\bigl(K_l^X\cup X, K_l^Y\cup Y\bigr)\geq l
\end{equation}
Furthermore we assume that both the uncoupled and perturbed Hamiltonian $H$ and $H'$ are connected to 
\begin{equation}
H^X:=H+\sum_{k\in K^X}W_k 
\end{equation}
by a gapped path in the same way as before, but with derivatives in $\caA_{K^X}$ and $\caA_{K^Y}$ respectively. This requirement formalizes the assumption that the impurities can be coupled to the system one after another without closing the gap. The spectral flow technique \cite{BMNS} allows to construct two unitary operators $U^X_{l}$ and $U^Y_{l}$ supported on $K_l^X$ and $K_l^Y$ which transform between the gapped sectors of $H$/$H^X$ and $H^X$/$H'$,
\begin{equation}
 U^X_{l} P (U^X_{l})^* \approx P_{}^X \quad \text{ and }\quad U^Y_{l} P_{}^X (U^Y_{l})^* \approx P'
\end{equation}
up to errors whose norm decays sub-exponentially in $l$ indicated by the $\approx$ symbol. Here, $P^X$ stands for the sector projection of $H^X$. Then indeed
\begin{equation}
 \begin{split}
  \omega'\bigl(AB\bigr)&\approx \omega \bigl( (U^X_l)^* (U^Y_l)^* A   B U_l^Y U_l^X\bigr) \\
  &= \omega \bigl( (U^X_l)^* A U_l^X (U^Y_l)^* B U_l^Y\bigr) \\
  &\approx \omega \bigl( (U^X_l)^* A U_l^X \bigr)\omega \bigl( (U^Y_l)^* B U_l^Y\bigr)\\
  &=\omega \bigl( (U^X_l)^*(U^Y_l)^* A  U_l^Y U_l^X \bigr)\omega \bigl((U^X_l)^* (U^Y_l)^* B U_l^Y U_l^X\bigr)\\
  &\approx \omega'\bigl( A\bigr) \omega'\bigl(B\bigr)
 \end{split}
\end{equation}
To get to the third line, the exponential clustering property was used for the state $\omega$.

\subsection{Impurities in Systems with Topological Quantum Order}

Another interesting class of models for which Theorem \ref{thm: main} translates into exponentially sharp locality estimates are those with topological quantum order (defined below). 
To account for the topological aspects of such systems, each volume $\Lambda$ here is a finite graph imbedded on a possibly non-trivial surface rather than a finite subgraph of a common infinite volume $\Gamma$. 
For the sake of concreteness, we restrict to two dimensional square lattices $\Lambda=\bbZ_L \times \bbZ_L$ of length $L$ with periodic boundary conditions (imbeddings on a torus) and to translation invariant finite-range interactions. This restriction includes one of the most prominent examples with topological order, Kitaev's toric code, see \cite{Ki}. To relate the models at different $L$, we fix a $\mu$ and we assume that the model parameters $\lVert F_0 \rVert$, $C_\mu$, and $v$ converge to finite non-zero values as $L\to \infty$. This puts us in a setting where  Theorem \ref{thm: main} applies ($\Lambda$-uniformly).   

We are concerned with impurity models which are set up for each such $\Lambda$ in the same way as in the previous section and the same notation and assumptions will be used. The only change we make is that we dismiss the requirement of a unique ground state for $H$ and instead we assume, for each volume $\Lambda$, the ground states of $H$ possess topological quantum order {\small (TQO)} defined as in \cite{BH,BHM,BHV}, see also \cite{MZ}: \\
\emph{For any local observable $A$ whose support $X$ is contained within a square of side length $L^*$ there is $z\in\bbC$, so that $PAP=zP$.}\\
This implies that different ground states cannot be distinguished by measurements localized on length scales up to $L^*$. Typically we have in mind that $L^*$ increases as a function of the system's size $L$, e.~g.~that $L^*\geq L^a$ for some $a>0$. 

\begin{proposition}\label{prop. Topo}
 For every $\mu'<1/\xi$, there is a constant $C > 0$, such that the following holds for every length $l\geq 0$: If $A$ is a local observable with support $X$ outside the $l$-fattening $K_l$ of the impurity set $K$ such that the union $X\cup K_{l/2}$ is contained in a square of side length $L^*$, then
 \begin{equation}
  \bigl\lvert \bigl(\psi_i', A \psi_j'\bigr) -z\delta_{ij} \bigr\rvert \leq C \lVert A\rVert \, \e^{-\mu' l}
 \end{equation}
 for $i,j=1,\dots,D$ and where $z\in\bbC$ is given by $PAP=zP$.
\end{proposition}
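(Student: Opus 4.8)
The plan is to mimic the proof of Proposition \ref{prop: Impurity1} almost verbatim, replacing the scalar relation $(\psi^{gs},A\psi^{gs})$ for a unique ground state by the TQO relation $PAP = zP$ for the topologically ordered sector. First I would fix an orthonormal basis $\{\psi_i\}$ of the unperturbed ground state sector of $H$ (now $D$-dimensional, spanned by the topological ground states) and, as before, use Theorem \ref{thm: main} to produce, for any $l\geq 0$, local operators $L^{l/2}_{ij}$ supported in $K_{l/2}$ with $\psi_i' \approx \sum_j L^{l/2}_{ij}\,\psi_j$ up to an error of size $(\text{const.})\,\e^{-\mu' l/2}$. Substituting twice (once for $\psi_i'$ and once for $\psi_j'$) gives
\begin{equation}
 \Bigl\lvert (\psi_i',A\psi_j') - \sum_{p,q}\bigl(\psi_p,\, L^{l/2*}_{ip}\, A\, L^{l/2}_{jq}\,\psi_q\bigr)\Bigr\rvert \leq (\text{const.})\,\e^{-\mu' l/2}.
\end{equation}
The operator $L^{l/2*}_{ip} A L^{l/2}_{jq}$ has support in $X\cup K_{l/2}$, which by hypothesis lies inside a square of side length $L^*$, so the TQO condition is in principle applicable to it — but only \emph{after} one has replaced the vectors $\psi_p,\psi_q$ (which are genuine ground states) by the true sector-projected object. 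The point is to insert $P$: write $\psi_p = P\psi_p$, $\psi_q = P\psi_q$, so each summand equals $(\psi_p,\, P\,L^{l/2*}_{ip} A L^{l/2}_{jq}\,P\,\psi_q)$, and then exploit $P(L^{l/2*}_{ip} A L^{l/2}_{jq})P = \tilde z_{ij,pq} P$ from TQO.

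The subtlety — and I expect this to be the main obstacle — is that TQO as stated only controls $PAP$ for observables $A$ supported in a square of side $\leq L^*$, whereas here the observable $L^{l/2*}_{ip} A L^{l/2}_{jq}$ is a product of three local operators and one must be careful that (i) its support really is contained in the allowed square, which is exactly why the hypothesis demands $X\cup K_{l/2}\subset$ (square of side $L^*$) and why the $L^{l/2}$'s are supported in $K_{l/2}$ rather than $K_l$; and (ii) one cannot directly conclude $P\,A\,P=zP \Rightarrow P\,L^*AL\,P = (\ldots)P$ unless $L$ itself commutes with $P$ or is also localized appropriately — so instead I would first use TQO on the \emph{single} operator $A$ to get $PAP = zP$, then estimate
\begin{equation}
 \bigl(\psi_p, L^{l/2*}_{ip} A L^{l/2}_{jq}\psi_q\bigr) = \bigl(L^{l/2}_{ip}\psi_p,\, A\, L^{l/2}_{jq}\psi_q\bigr),
\end{equation}
and here one wants to replace $A$ by $PAP = zP$ at the cost of terms like $(L^{l/2}_{ip}\psi_p, (1-P)A\cdots)$. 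The control of these `off-diagonal' leakage terms is precisely the content of the exponential clustering estimate — but note the vectors $L^{l/2}_{jq}\psi_q$ need not lie in the range of $P$. This is the genuinely new step compared to Proposition \ref{prop: Impurity1}: one must argue, using $\|\psi_i' - \sum_j L^{l/2}_{ij}\psi_j\| = \e^{-\mu' l/2}O(1)$ together with $P'\psi_i' = \psi_i'$, that $\| (1-P) L^{l/2}_{ij}\psi_j\|$ is itself exponentially small in $l$, i.e.\ the transformed vectors are approximately in the old sector up to $\e^{-\mu' l/2}$ corrections — which follows since $\psi_i'$ is, and the $L$'s approximate $\psi_i'$.

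Assembling: after these replacements, the leading term becomes $\sum_{p,q} z\,\delta_{pq}\,(L^{l/2}_{ip}\psi_p, L^{l/2}_{jq}\psi_q)$ modulo $\e^{-\mu' l/2}O(1)$, and then $\sum_p (L^{l/2}_{ip}\psi_p, L^{l/2}_{jp}\psi_p) = (\psi_i',\psi_j') + \e^{-\mu' l/2}O(1) = \delta_{ij} + \e^{-\mu' l/2}O(1)$ by orthonormality of $\{\psi_i'\}$ and Theorem \ref{thm: main} once more. This yields $(\psi_i',A\psi_j') = z\,\delta_{ij} + \e^{-\mu' l/2}O(1)$; absorbing the factor $1/2$ into a slightly smaller decay rate $\mu'$ (permissible since $\mu'$ ranges over all values $<1/\xi$) gives the stated bound. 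Throughout, the polynomial growth of $|\partial^\Phi K_{l/2}|$ is harmless and, as in Proposition \ref{prop: Impurity1}, can be swallowed into $\mu'$; I would remark that $z$ is well-defined and $l$-independent because TQO assigns the same scalar to $A$ regardless of which ground state one tests against, and because $X$ is fixed.
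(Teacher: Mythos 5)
Your plan correctly identifies several of the right ingredients: Theorem~\ref{thm: main} produces $L^{l/2}_{ij}$ supported in $K_{l/2}$, the support of the combined operator $L^{l/2*}_{ip}AL^{l/2}_{jq}$ lies in $X\cup K_{l/2}$ (inside the $L^*$-square), and TQO on $A$ alone gives $PAP=zP$. But there are two genuine gaps which together break the argument.

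First, your central technical claim --- that $\|(1-P)L^{l/2}_{ij}\psi_j\|$ is exponentially small in $l$ ``since $\psi_i'$ is [approximately in the old sector], and the $L$'s approximate $\psi_i'$'' --- is false. The vector $\psi_i'$ lies in $\operatorname{ran}P'$, not $\operatorname{ran}P$, and since the impurity coupling $W$ has norm of order $\gamma$ (not small), $\|P'-P\|$ is $O(1)$; thus $\|(1-P)\psi_i'\|$ is generically $O(1)$. The paper never needs this quantity to be small: it instead invokes the \emph{degenerate} version of the exponential clustering Theorem~\ref{thm: ExpCluster}, which controls $(\Omega,A'B\Omega)$ for a ground state $\Omega$ whenever $PA'\Omega=PA'^*\Omega=0$, with \emph{no} hypothesis on $B\Omega$ being close to $\operatorname{ran}P$. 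The condition on $A'=A-z$ is automatic from TQO. Since $A$ commutes with anything supported in $K_{l/2}$, one can write $L^*AL=AL^*L$ and apply the theorem directly with $B=L^*L$ (decorated by the $I$'s), bypassing any need to estimate $(1-P)L\psi$.

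Second, the off-diagonal bookkeeping does not close. TQO as formulated only concerns observables on the bulk algebra $\caA_\Lambda$, not on the enlarged impurity Hilbert spaces $\caI_k$; e.g.\ $P I_{ij}P=P^{gs}\otimes I_{ij}\ne zP$. Since $L^{l/2*}_{ip}L^{l/2}_{jq}$ acts nontrivially on $\caI$, you cannot conclude $PL^*LP\propto P$, so the $\delta_{pq}$ you write into the assembling step is unjustified. The paper handles this by choosing the explicit product basis $\psi_i=\psi^{gs}_{i_b}\otimes\phi_{i_K}$ and inserting the transition operators $I_{i_K j_K}$: TQO then annihilates only the $p_b\ne q_b$ (bulk-index) terms, while the $p_K\ne q_K$ terms survive and are re-summed. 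Finally, even granting a $\delta_{pq}$, the identity $\sum_p(L_{ip}\psi_p,L_{jp}\psi_p)=(\psi_i',\psi_j')$ is an arithmetic slip: $(\psi_i',\psi_j')\approx\sum_{p,q}(L_{ip}\psi_p,L_{jq}\psi_q)$ is a \emph{double} sum, and it is precisely TQO on the factor $L^*L$ (after decomposing over the bulk/impurity tensor structure) that collapses the off-diagonal bulk indices and allows the matching.
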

The Proposition shows that away from the impurities in the system (or from perturbations if all $\caI_k\cong \bbC$) the measurements of local (up to $L^*$) observables remain unchanged with exponential accuracy. 
Moreover, in this region {\small TQO} persist `with error' that is exponentially decreasing. This terminology was coined in \cite{BHV}. 

\begin{proof}
 We essentially follow the proof for Proposition \ref{prop: Impurity1} with adaptations due to the possible, say $f$-fold, degeneracy of the ground state energy for the bulk system (without the impurities). Let $\{\psi_i\}$ be an orthonormal basis in the ground state subspace for $H$ of the form 
 \begin{equation}
 \begin{split}
  &\psi_i = \psi^{gs}_{i_b} \otimes \phi_{i_K}, \qquad i\equiv (i_b, i_K) \\ 
  &i_b=1,\dots, f,\quad i_K=1,\dots,\dim(\caI)
 \end{split}
 \end{equation}
 Note that $D=f\cdot \dim(\caI)$. By using Theorem \ref{thm: main} and the local transition operators $I_{i_K,j_K}$ as before we find
\begin{equation}
\begin{split}
 &\Bigl\lvert \bigl(\psi'_i,A \psi_j'\bigr)-\sum_{p,q} \bigl( \psi_{(p_b,m_K)}^{},A I_{m_K p_K}^{} L_{i p}^{l/2*}  L_{j q}^{l/2} I_{ q_K m_K}^{} \psi_{(q_b, m_K)}^{}\bigr)\Bigr\rvert\\
 &\leq \text{(const.)}\, \e^{-\mu' l/2}
\end{split}
\end{equation}
for any $m_K=1,\dots,\dim(\caI)$. By the definition of the local transformation operators and assumptions of the Proposition, the support of 
\[A I_{m_K p_K}^{} L_{i p}^{l/2*}  L_{j q}^{l/2} I_{ q_K m_K}^{}\]
lies within a square of length $L^*$ and therefore those terms in the sum with $p_b\neq q_b$ vanish as a consequence of the {\small TQO} condition. We can assume that the support $X$ of $A$ is disjoint from $K$ otherwise there is nothing to be proven. On the square lattice $\lvert \partial^\Phi K_{l/2}\rvert$ only grows polynomially in $l$. Again by the {\small TQO} condition
\begin{equation}
 \bigl( \psi_{(p_b,m_K)}^{},A  \psi_{(q_b, m_K)}^{}\bigr)=z\delta_{p_b q_b}
\end{equation}
and in particular
\begin{equation}
 P (A-z) \psi_{(p_b,m_K)}=P (A^*-\bar{z})\psi_{(p_b,m_K)}=0
\end{equation}
for all $p_b$ and $m_K$, so that the proof of the Proposition follows by the exponential clustering Theorem \ref{thm: ExpCluster}.

\end{proof}

\section{Simple Example of an Exponentially Local Spectral Flow}\label{sec: Example}

Here we present a particular impurity model in the same setup as above for which it is not difficult to construct a unitary exponentially local spectral flow. We are dealing with a $\nu$-dimensional lattice of $S=1/2$ spins, i.e.~$\caH_x\cong \bbC^2$, and restrict to cubic volumes $\Lambda=\bbZ_L^\nu$ of length $L$ with periodic boundary conditions. The bulk interaction is that of an `xy-model' with Hamiltonian
\begin{equation}
 H=-\sum_{d( x, y) =1} \bigl( S_x^1  S_y^1 +S_x^2 S_y^2 \bigr) +\sum_{x} (u(x)+2\nu) \bigl( 1/2 +S_x^3\bigr)
\end{equation}
where $S_x^i$, $i=1,2,3$, are the Pauli matrices at site $x$. For $u$ we take a positive function bounded below by $u(x)\geq  \gamma$, for $\gamma>0$, which then is also the spectral gap above the zero ground state energy. The ground state is the product of `spin-down' vectors (eigenvectors of $S_x^3$ and eigenvalue $-1/2$) and frustration-free. Note that $(1/2 +S_x^3)$ is the orthogonal projection onto the `spin-up' state and that in the first sum we can replace $ S_x^1  S_y^1 +S_x^2 S_y^2$ by $S_x^+ S_y^-$ in terms of the spin raising/lowering operators $S_x^\pm = S_x^1 \pm i S_x^2$. In fact, this model is unitarily equivalent to a system of hard core bosons where sites occupied by a boson correspond to those being in spin-up state (sometimes called Matsubara--Matsueda correspondence \cite{MM}). Locality of operators is strictly preserved in this correspondence. Using standard second quantization notation, see e.g.~\cite{DG}, 
\begin{equation}
 H\cong \widetilde{H}:= P_{\text{hc}}\, \mathrm{d}\Gamma(-\Delta +u)\, P_{\text{hc}}
\end{equation}
where $P_{\mathrm{hc}}$ is the orthogonal projection on the subspace of bosonic Fock space $\Gamma(l^2(\Lambda)\bigr)$ with at most one Boson per site (hard core condition). $\Delta$ denotes the discrete Laplacian. The particle number is conserved in this many-boson system just as our $xy$-model conserves the spin, i.e.~$[H,\sum_x S_x^3]=0$.\\

We now add an impurity at a single site, $K=\{k\}$, which itself consists of $N$ spins with Hilbert space $\caI=(\bbC^2)^{\otimes N}$. This is equivalent to adding $N$ sites $I=\{i_1,\dots,i_N\}$ on the Boson model side of the correspondence. The coupling to the system is described by a smooth path $W(s)$ of spin-conserving operators on $\caH_k\otimes \caI$ with $H(0)=H$, so that $H(s)=H+W(s)$ maintains a $2^N$-dimensional $g$-gapped sector. Recall that we denoted with $C_W$ the maximal norm of the derivative of $W$. Initially, any state with Bosons located only at the Impurity's sites is a ground state. By the assumption of spin/particle number conservation along the path, the low-lying sector is fully described by eigenfunctions of a few-particle system (up to $N$). These decay exponentially, making the model so tractable.

\begin{proposition}
 Let $P(s)$ be the spectral sector projections for the low-lying sector along the path of Hamiltonians $H(s)$ in the $xy$-model with impurity as defined above. There are $C,\mu'>0$ depending only on $g$ and $N$, such that the following holds: for every length $l\in[0,\infty]$, there is a smooth path 
 of self-adjoint operators $G_l(s)$ and of unitary operators $U_l(s)$ with support in the $l$-fattening $K_l$ of the impurity site, which satisfy
 \begin{equation}\label{eq: SpecFlow}
  \bigl\lVert P(s) - U_l^{}(s) P(0) U_l^{*}(s) \bigr\rVert \leq C \,\e^{-\mu'l}
 \end{equation}
and solve
\begin{equation}\label{eq: EvoEq}
 -i\partial_s U_l^{}(s)=G_l(s)U_l^{}(s), \quad U_l^{}(0)=\textnormal{\opunit}
\end{equation}
\end{proposition}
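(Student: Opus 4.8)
The plan is to exploit the special structure of the $xy$-model with impurity: because spin (equivalently particle number) is conserved along the entire path $H(s)$, the low-lying $2^N$-dimensional sector lives entirely inside the at-most-$N$-particle subspace of the hard-core boson system. So the first step is to pass to the boson picture $\widetilde H(s)$ via the Matsubara--Matsueda correspondence, note that $[H(s),\sum_x S^3_x]=0$ restricts everything to $\bigoplus_{m=0}^N \mathcal F_m$, where $\mathcal F_m$ is the $m$-particle sector, and observe that the sector projection $P(s)$ projects onto states obtained by seeding the impurity sites with bosons and then possibly letting a few hop into the bulk. The point is that on each fixed finite-particle sector the Hamiltonian is effectively a (few-body) Schr\"odinger operator $-\Delta + u$ restricted/projected by the hard-core condition, with a genuine one-particle gap coming from $u(x)\ge\gamma$ away from the impurity.

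Next I would construct the generator $G_l(s)$ by the standard spectral-flow (quasi-adiabatic) formula of \cite{BMNS,H}: write $G(s)$ as an integral over a suitable odd weight function $w_\gamma$ of $\tau^{H(s)}_t(\partial_s H(s))$, which here is $\tau^{H(s)}_t(\dot W(s))$, and which is a finitely-supported perturbation. Since $\dot W(s)$ is supported at the single impurity site and the dynamics is that of a few-particle Schr\"odinger operator, the Lieb--Robinson bound combined with the exponential decay of few-particle resolvents/eigenfunctions of $-\Delta+u$ gives that $\tau^{H(s)}_t(\dot W(s))$, and hence $G(s)$, decays \emph{exponentially} in distance from $K$ — not merely sub-exponentially — because on a finite-particle sector one is really estimating matrix elements of $f(H(s))$ for analytic-in-a-strip $f$, and the relevant combinatorial factors $|\partial^\Phi X|$ grow only polynomially on $\mathbb Z_L^\nu$. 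I would then define $G_l(s)$ by truncating $G(s)$ to $K_l$ (e.g.\ via the normalized partial trace over $\Lambda\setminus K_l$, exactly as in the proof of Proposition~\ref{prop: Weak}), so that $\|G(s)-G_l(s)\| \le C e^{-\mu' l}$ uniformly in $s$ and in $\Lambda$, and let $U_l(s)$ be the unitary solving \eqref{eq: EvoEq} with this truncated generator.

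Finally I would verify \eqref{eq: SpecFlow}. The untruncated flow $U(s)$ solving $-i\partial_s U = G(s)U$ satisfies $P(s)=U(s)P(0)U(s)^*$ exactly, by the defining property of the quasi-adiabatic generator (the commutator identity $[G(s),P(s)] = -i\,\partial_s P(s)$ holds because $G(s)$ is built from the spectral-gap weight function adapted to the $g$-gap). Then a Duhamel/Gr\"onwall estimate comparing $U_l(s)$ with $U(s)$ gives $\|U_l(s)-U(s)\| \le \int_0^s \|G_l(r)-G(r)\|\,\d r \le C e^{-\mu' l}$, and hence $\|P(s) - U_l(s)P(0)U_l(s)^*\| = \|U(s)P(0)U(s)^* - U_l(s)P(0)U_l(s)^*\| \le 2\|U_l(s)-U(s)\| \le C e^{-\mu' l}$; smoothness of $s\mapsto G_l(s),U_l(s)$ follows from smoothness of $W(s)$ and of the truncation. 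The case $l=\infty$ is just the statement for the untruncated flow $U(s)=U_\infty(s)$.

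The main obstacle is the exponential (rather than sub-exponential) decay bound on $G(s)$: in the general setting of \cite{BMNS} one is stuck with sub-exponential decay because the quasi-adiabatic weight function cannot be taken with a compactly supported Fourier transform while keeping the generator well-defined, and Lieb--Robinson only controls commutators. Here the fix is precisely the few-particle reduction — on a fixed finite-particle sector the spectral-gap weight function can be replaced by (or compared with) a function analytic in a strip, and one estimates $\langle \psi, f(H(s))\phi\rangle$ directly using the combinatorial/geometric decay of $N$-particle hopping amplitudes for $-\Delta + u$, exploiting that $u$ confines particles away from $K$. Making this reduction rigorous and uniform in $L$ (using the assumed convergence of $\|F_0\|,C_\mu,v$) is where the real work lies; everything else is the routine Duhamel and partial-trace machinery already used in Proposition~\ref{prop: Weak}.
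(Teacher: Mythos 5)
Your proposal correctly identifies the structural feature that makes the result possible — spin/particle-number conservation confines the low-lying sector to the at-most-$N$-particle subspace, on which exponential localization is available — and the Duhamel/truncation bookkeeping at the end is the same as in the paper. But the construction of the generator is genuinely different from the paper's, and it is precisely in that step that your proposal has a gap.

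You propose to build $G(s)$ as the Hastings quasi-adiabatic generator, $G(s)=\int w_\gamma(t)\,\tau_t^{H(s)}(\dot W(s))\,\d t$, and to get exponential rather than sub-exponential decay by ``replacing (or comparing)'' the weight function $w_\gamma$ with something analytic in a strip, using the few-body reduction. This does not work as stated. The constraint that makes the quasi-adiabatic generator intertwine the spectral projections, $[G(s),P(s)]=-i\,\partial_s P(s)$, forces $\hat w_\gamma(\omega)=-1/\omega$ for $|\omega|\ge g$; any $L^1$ function $w$ with exponential decay in $t$ would have $\hat w$ analytic in a strip, and the identity theorem then forces $\hat w=-1/\omega$ on the entire strip, which is singular at $0$. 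So there is \emph{no} exponentially decaying admissible weight function, and this obstruction is not cured by working in a fixed finite-particle sector: the time integral still spreads via Lieb--Robinson, and the sub-exponential tail of $w_\gamma$ is still there. Estimating $\langle\psi,f(H(s))\phi\rangle$ for strip-analytic $f$ is a fine idea, but the quasi-adiabatic generator is not of that form. The paper sidesteps this entirely by using Kato's transformation function, $G(s)=i[P(s),\partial_s P(s)]$, whose defining intertwining property is an algebraic identity requiring no weight function. Within each $n\le N$ particle sector, $P^{(n)}(s)$ and $\partial_s P^{(n)}(s)$ are written as contour integrals of resolvents $(H^{(n)}(s)-z)^{-1}$, and a Combes--Thomas estimate for the $n$-body discrete Schr\"odinger operator then gives genuine exponential off-diagonal decay of $G^{(n)}(s)$ away from $K$ (and $G^{(n)}(s)=0$ automatically for $n>N$ since $P^{(n)}(s)=0$). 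That is the missing ingredient in your proposal: replace the time-integral/weight-function construction by the resolvent-based Kato generator and invoke Combes--Thomas on the few-body resolvent, rather than trying to analyticize the quasi-adiabatic filter.
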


We only give a brief \emph{sketch of the proof}: 
We show that Kato's `transformation function', see II.~\S~4 in \cite{Ka}, which is a particular choice of a spectral flow, is exponentially quasi-local for our model. It is defined as the unique solution $U(s)$ of \eqref{eq: EvoEq} generated by the commutator
\begin{equation}
 G(s):= i \bigl[P(s),\partial_s P(s)\bigr]
\end{equation}
and satisfies \eqref{eq: SpecFlow} with $l=\infty$. For $n\geq 0$, let $H^{(n)}(s)$ and $P^{(n)}(s)$ be the restrictions of $H(s)$ and $P(s)$ to the invariant $n$-spin (particle) subspace. If $n>N$, note that $P^{(n)}(s)=0$, which implies $G^{(n)}(s)=0$ and $U^{(n)}(s)=\opunit$ for the restrictions of $G(s)$ and $U(s)$. By the holomorphic functional calculus, the sector projection and its derivative can be expressed in terms of the resolvent as
\begin{equation}
 \begin{split}
  &P^{(n)}(s)=-\frac{1}{2 \pi i} \int_{\caC(s)} \mathrm{d}z\, \bigl( H^{(n)}(s)-z \bigr)^{-1}\\
  &\partial_s P^{(n)}(s)=\frac{1}{2 \pi i} \int_{\caC(s)} \mathrm{d}z\,\bigl( H^{(n)}(s)-z \bigr)^{-1}\, \partial_s W^{(n)}(s) \,\bigl( H^{(n)}(s)-z \bigr)^{-1}
 \end{split}
\end{equation}
for a family of contours $\caC(s)$ at distance of at least $g/2$ to the spectrum and enclosing the low-lying spectrum. Switching to the particle description of our model, $H^{(n)}(s)$ is unitarily equivalent to to an $n$-particle discrete Schr\"odinger operator on (the symmetric hard-core subspace of) $l^2((\Lambda\cup I)^n)$. For such operators, a Combes--Thomas type estimate \cite{CT} shows that the matrix elements of the resolvent as in the above integrals decay exponentially away from the diagonal in the canonical position basis. The length scale of this decay may increase with the dimension $n$ and as the gap closes (for $z\in\caC(s)$). Since $\partial_s P^{(n)}(s)$ is equivalent to an operator with support in $K\cup I$, the matrix elements of $G^{(n)}(s)$ then decay exponentially away from $k$. Truncating $G^{(n)}(s)$ outside of $K_l$ only gives a difference in norm that decays exponentially in $l$. Therefore, we can define $G_l(s)$ of the Proposition as the direct sum (from $0$ to $N$) of these 
truncations and $U_l(s)$ 
as the corresponding unique solution of differential equation \eqref{eq: EvoEq}.

\vspace{11mm}
\bibliographystyle{plain}

\end{document}